\newcommand{\tr}{\text{tr}}
\def\lb{\left(}
\def\rb{\right)}
\newcommand{\cO}{\mathcal{O}}
\newcommand{\N}{\mathds{N}}
\newcommand{\ketbra}[2]{|#1\rangle\langle #2|}
\newcommand{\cS}{\mathcal{S}}
\newcommand{\cU}{\mathcal{U}}
\newcommand{\cT}{\mathcal{T}}
\newcommand{\cC}{\mathcal{C}}
\renewcommand{\tilde}[1]{\widetilde{#1}}
\renewcommand{\bar}[1]{\overline{#1}}
\newcommand{\cL}{\mathcal{L}}
\newcommand{\M}{\mathcal{M}}
\newcommand{\cH}{\mathcal{H}}
\newtheorem{theorem}{Theorem}
\newtheorem{cor}{Corollary}
\newtheorem*{lem*}{Lemma}
\newtheorem{lem}{Lemma}
\newtheorem{prop}{Proposition}
\begin{document}

\title{Limitations of optimization algorithms on noisy quantum devices}

\author{Daniel Stilck Fran\c{c}a}
 \email{dsfranca@math.ku.dk}
\affiliation{QMATH, Department of Mathematical Sciences, University of Copenhagen, Denmark}

 \author{Raul Garcia-Patron}
\email{rgarcia3@exseed.ed.ac.uk}
\affiliation{School of Informatics, University of Edinburgh, Edinburgh EH8 9AB, UK}


\date{\today}

\begin{abstract}
Recent technological developments have focused the interest of the quantum computing community on investigating how near-term devices could outperform classical computers for practical applications. A central question that remains open is whether their noise can be overcome or it fundamentally restricts any potential quantum advantage. We present a transparent way of comparing classical algorithms to quantum ones running on near-term quantum devices for a large family of problems that include optimization problems and approximations to the ground state energy of Hamiltonians. Our approach is based on the combination of entropic inequalities that determine how fast 
the quantum computation state converges to the fixed point of the noise model, together with 
established classical methods of Gibbs state sampling.
The approach is extremely versatile and allows for its application to a large variety of problems, noise models and quantum computing architectures. We use our result to provide estimates for a variety of problems and architectures that have been the focus of recent experiments, such as quantum annealers, variational quantum eigensolvers, and quantum approximate optimization. The bounds we obtain indicate that substantial quantum advantages are unlikely for classical optimization unless the current noise rates are decreased by orders of magnitude or the topology of the problem matches that of the device. This is the case even if the number of qubits increases substantially. 
We reach similar but less stringent conclusions for quantum Hamiltonian problems.
\end{abstract}

\maketitle
Quantum computation experiments have reached the classical complexity frontier, where the devices are
hard or impossible to simulate with classical resources \cite{Arute2019}.
A question that has focused the attention of the quantum computation community in the last years
is whether these near-term quantum computing devices, despite their inherent noise and lack of quantum error-correction, have any potential to demonstrate superiority over classical 
computing devices for practically relevant problems \cite{Preskill_2018}. 

A growing body of work has suggested using near-term devices to solve both optimization problems relevant in a large portfolio of industrial applications 
and approximating the ground state energy of physically relevant Hamiltonian~\cite{Peruzzo_2014,Kokail_2019,PhysRevA.99.062304,LaRose_2019,bravoprieto2019variational,McArdle_2019}. 
As opposed to quantum advantage tests based on sampling of random circuits~\cite{Arute2019}, optimization has a clear practical application and at least one straightforward certification protocol of its superiority over a classical computer: 
whichever device outputs the state with the lowest value of the cost function or Hamiltonian energy wins. 
This makes them ideal candidates for a convincing practical quantum advantage demonstration.  
The central question that remains open is whether a quantum advantage for practical problems is at all possible, especially taking into consideration the unavoidable presence of noise. 

In this article, we develop a new technique to provide sufficient conditions for when a noisy quantum device cannot significantly outperform a classical computer when it comes to optimizing a cost function given a noise model.
Additionally, we develop a technique to certify that a classical algorithm outperforms a given noisy quantum device 
running for a certain depth or number of cycles. This is shown using a nontrivial lower bound on the achievable value of the cost function 
or energy by the noisy quantum device. 
Both techniques can be adapted to a plethora of problems, noise models and computing architectures, opening the room for analytical proofs and concrete analysis of the limitations of near-term quantum computing devices. As a demonstration of its potential, we apply them to discuss the effect of local depolarizing noise on quantum circuits used to solve classical Ising models, which are equivalent to well-know NP-complete optimization problems, and variational quantum eigensolvers for local Hamiltonians problems \cite{moll2017quantum, kandala2017hardware, wang2019accelerated,farhi2014quantum}. We also generalize the technique to include noise models beyond depolarizing noise and develop a continuous time version, which allows us to study the effect of noise on quantum annealers \cite{farhi2000quantum,RevModPhys.90.015002}. 

Our analysis suggests that, for classical optimization problems, 
speedups may be impossible with current noise rates and are unlikely even at the boundary of fault-tolerant errors rates,
especially when the architecture does not match the topology of the problem. 
The problems where near-term devices could have an opportunity window are most probably quantum many-body physics motivated, where the architecture topology has been tailored-made to match the Hamiltonian of the problem and the correlation length in the system is bounded. Whether examples of this kind exist and cannot be simulated efficiently classically is, up to our knowledge, an important and non-trivial 
open problem. Furthermore, our bounds show that it is unlikely that increasing the number of qubits of current quantum devices without reducing the noise of the gates will lead to quantum speedups.


\section{An introduction to the technique}

In this work we are interested in solving optimization problems that can be 
recast as the minimization of a cost function $\tr\lb\rho H\rb$, where $\rho$ is
a quantum state of $n$ qubits, which therefore includes sub-problems defined over probability distributions on bit strings, and $H$ is a Hermitian operator characterizing the cost function of the problem. Many well-studied optimization problems, such as MAXCUT, 
can be recast as optimizing classical Ising Hamiltonians~\cite{Lucas_2014}, a connection exploited in most of the proposal of quantum algorithms for those problems and behind the architecture design of quantum annealers. In the case of a many-body ground-state problem, $H$ would correspond to the Hamiltonian of the system \cite{moll2017quantum}. 

Every Hamiltonian $H$
has an associated family of Gibbs states, $\sigma_\beta=e^{-\beta H}/\mathcal{Z}_\beta$,
where $\mathcal{Z}_\beta=\tr\lb e^{-H}\rb$ is the partition function and $\beta$ is the inverse temperature parameter, i.e, $\sigma_\beta$ corresponds to the thermal equilibrium state with temperature
$1/\beta$. The infinite temperature case $\beta=0$ corresponds to a fully mixed state of $n$ qubits,
the case $\beta=\infty$ provides the ground-state of $H$, which for optimization problems has its support over the set of solutions.

In what follows we present an approach to construct a sufficient condition that guarantees that
a noisy quantum computation solving a given optimization problem does not perform better
than its classical counterpart. 
As sketched in Figure \ref{fig:sketch}, the techniques combines relative entropy convergence methods (subsection \ref{subsec:noiseconv}), 
mirror descent (subsection \ref{subsec:mirrordescent}) and results on efficient algorithms for Gibbs state sampling.
The Gibbs sampling result being dependent on the specific problem the quantum algorithm is solving, we delay its presentation to the
sections where specific examples are presented (sections \ref{sec:IsingH} and \ref{sec:VQE}). In addition, we construct in 
subsection \ref{subsec:lowerbound} a technique that allows to certify that a classical algorithm does better than a given noisy device.

\begin{figure}[h!]
\centering
\includegraphics[width=1\columnwidth]{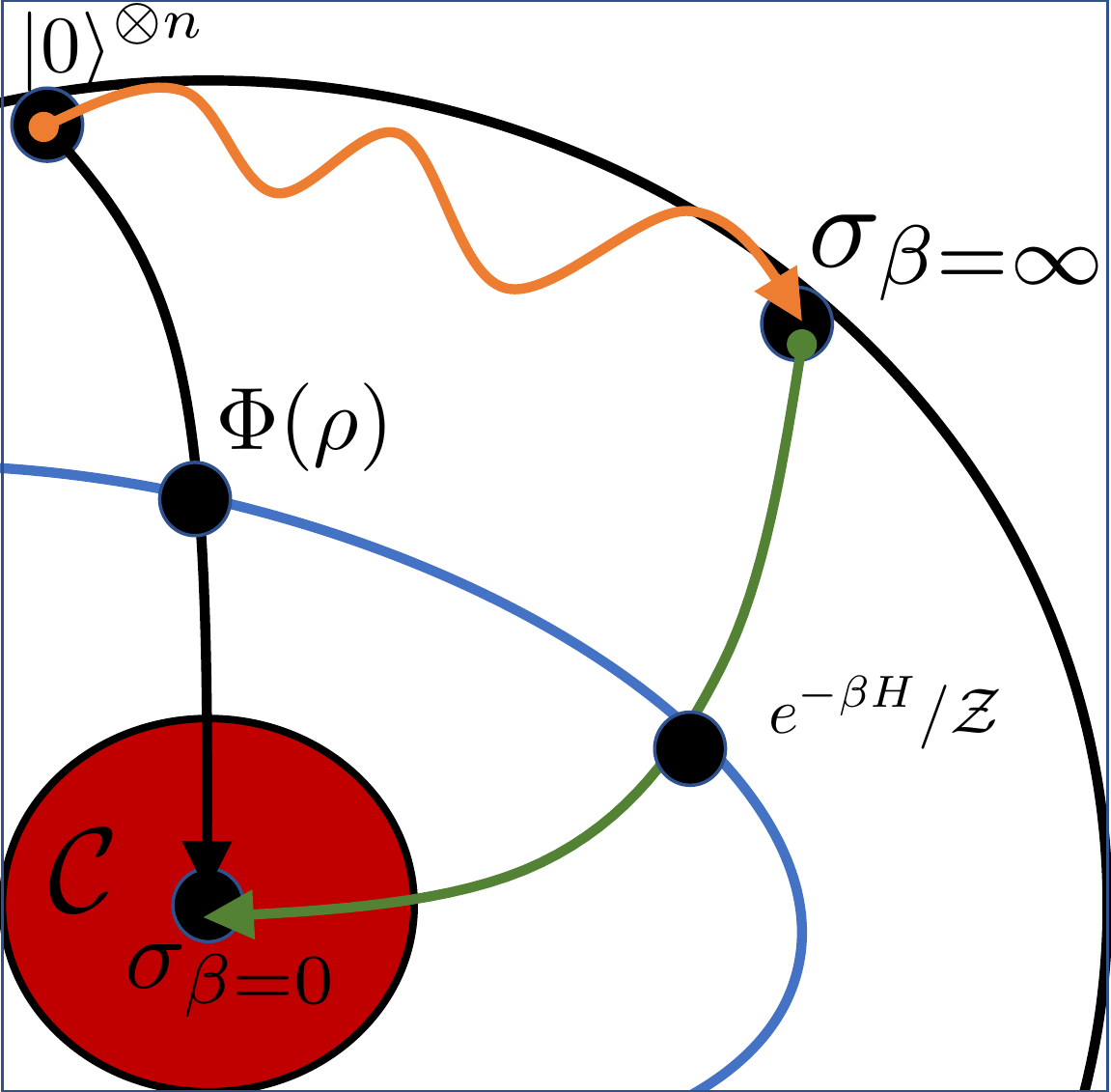}
\caption{The noisy quantum computer initialized at $|0\rangle^{\otimes n}$ attempts to follow a path (orange arrow) 
over the hyper-surface of the hyper-Bloch-sphere of $n$ qubits until reaching the targeted solution
of the problem with Hamiltonian $H$, $\sigma_{\beta=\infty}$, which corresponds to the Gibbs state at zero temperature. The noise imposes an alternative path (black arrow) that directs the computation toward the fixed point of the noise, the maximally mixed state (Gibbs state with $\beta=0$). Mirror descent allows us to associate to every intermediate state of the computation
$\Phi(\rho)$ an equivalent Gibbs state $e^{-\beta H}/\mathcal{Z}$ that provides the same cost function estimate as $\Phi(\rho)$ up to 
relative error $\epsilon$. Therefore, there is a one-to-one correspondence between the noisy computation path and an equivalent Gibbs state path (green arrow). More precisely, there is a manifold of states with approximately the same energy as $e^{-\beta H}/\mathcal{Z}$ (blue line). For many classes of problems, for any $\beta\leq\beta_c$ there will exist an efficient classical algorithm that samples from the Gibbs state, demarcating a region of efficient classical simulation (red circle). Relative entropy convergence methods can then be used to certify that the noisy quantum computer has entered the classically efficient region.}

\label{fig:sketch}
\end{figure}

\subsection{The effect of noise}
\label{subsec:noiseconv}

We will assume that we wish to implement a quantum circuit on $n$ qubits consisting of $D$ layers of unitaries $\cU_1,\ldots,\cU_D$ acting on $n$ qubits.
However, due to imperfections in the implementation, they are interspersed by a quantum channel $T$ that describes the noise, which for simplicity we will assume to be local and uniform over the circuit and time-invariant. The output of the noisy quantum circuit is then described by the quantum channel $\Phi=\bigcirc_{t=1}^D T\circ \mathcal{U}_t$. 
We consider noise models that drive the system to a fixed point $\sigma$ s.t. $T(\sigma)=\sigma$, which is the generic case. 
For our framework, it will be important to quantify how fast this happens when measured in the relative entropy $D(\rho\|\sigma)$. The following lemma exemplifies this for local depolarizing noise.
\begin{lem}\label{thm:concentration}
For $T$ being a layer of $1-$qubit depolarizing channels on $n$ qubits with depolarizing probability $p$, the channel  $\Phi=\bigcirc_{t=1}^D T\circ \mathcal{U}_t$ satisfies
\begin{align}\label{equ:strong-data-proc}
    D(\Phi(\rho)||\sigma)\leq (1-p)^{2D}D(\rho||\sigma)\leq (1-p)^{2D}n
\end{align}
for all states $\rho$ \cite{M_ller_Hermes_2016}. 
\end{lem}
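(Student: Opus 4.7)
The plan is to combine two elementary facts about the relative entropy with the strong data-processing inequality (SDPI) for depolarizing noise that is cited from Müller-Hermes 2016. Since $T$ is a tensor product of $1$-qubit depolarizing channels, its unique fixed point is $\sigma=\one/2^n$, the maximally mixed state. I would first observe that each unitary layer acts trivially on $D(\,\cdot\,\|\sigma)$: because $\cU_t(\sigma)=\sigma$ and the relative entropy is invariant under the joint action of a unitary channel on both arguments,
\begin{equation}
D\bigl(\cU_t(\rho)\,\|\,\sigma\bigr)=D\bigl(\cU_t(\rho)\,\|\,\cU_t(\sigma)\bigr)=D(\rho\|\sigma).
\end{equation}
Hence the unitary layers contribute nothing to the contraction, and only the noise layers $T$ matter.

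Next I would invoke the SDPI for the local depolarizing layer, namely $D(T(\rho)\|\sigma)\leq (1-p)^{2}\,D(\rho\|\sigma)$ for all $\rho$. Combining this with the unitary invariance observed above and iterating through the $D$ alternations inside $\Phi=\bigcirc_{t=1}^D T\circ\cU_t$ yields by straightforward induction on $D$ the inequality $D(\Phi(\rho)\|\sigma)\leq (1-p)^{2D}D(\rho\|\sigma)$. The second inequality in the statement then follows from the general bound $D(\rho\,\|\,\one/2^n)=n-S(\rho)\leq n$, using only nonnegativity of the von Neumann entropy (in bits).

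The main obstacle — and the only genuinely nontrivial ingredient — is the SDPI constant $(1-p)^{2}$. A naive application of joint convexity or of the ordinary data-processing inequality only gives a contraction factor of $1-p$ per noise layer; obtaining the sharper rate $(1-p)^{2}$ requires the quantum strong data-processing inequality for depolarizing channels and, crucially, the tensorization statement that ensures this constant does not degrade when $T_1$ is applied in parallel across all $n$ qubits. Once that cited result is in hand, the remainder of the proof is a short two-line induction.
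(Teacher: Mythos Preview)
Your proposal is correct and matches the paper's approach: the paper does not give a standalone proof of this lemma but simply cites the SDPI result, and its more general Lemma~\ref{lem:dataprocessedtriangle} in the supplementary material (based on the data-processed triangle inequality) reduces exactly to your argument in the doubly stochastic case $\sigma=\one/2^n$, where $D_\infty(\cU_t(\sigma)\|\sigma)=0$ is precisely your unitary-invariance observation. Your identification of the $(1-p)^2$ contraction constant and its tensorization as the only nontrivial input is also accurate.
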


\subsection{An alternative path}
\label{subsec:mirrordescent}

The key tool to achieving our goal is to decouple our result from the simulation of the noisy quantum circuit. We achieve this with mirror descent~\cite{Tsuda2005,Bubeck2015}, which was recently used in several results in quantum computing~\cite{1802.09025,Aaronson_2018,Brandao2017b,fern2019faster,Youssry_2019}, and the variational formulation of the relative entropy~\cite{Petz_1988}.
\begin{lem}\label{thm:mirrordescent}
Let $\Phi(\rho)$ be the output of a noisy quantum device, $\sigma=I/2^n$,  
and the relative entropy $D(\Phi(\rho)||\sigma)=n-S(\Phi(\rho))$, where $S(\Phi(\rho))=-\tr(\lb \Phi(\rho)\lb \log(\Phi(\rho)\rb\rb$. 
Then for any Hamiltonian $H$ there is a $\lambda\in [0,D(\Phi(\rho)||\sigma)]$ such that the 
state 
\begin{align}\label{equ:thermalstateclassical}
    \tilde{\sigma}=\text{exp}\lb -\frac{\lambda}{\|H\|\epsilon}H\rb/\mathcal{Z}
\end{align}
satisfies:
\begin{align}\label{equ:goodenergy2}
    \tr\lb H\lb \tilde{\sigma}-\Phi(\rho)\rb\rb\leq \|H\|\epsilon.
\end{align}
\end{lem}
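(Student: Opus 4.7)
The plan is to parametrize the family of Gibbs states by $\beta \geq 0$ via $\sigma_\beta := e^{-\beta H}/\mathcal{Z}_\beta$, and to locate the required thermal state by a one-dimensional mean-value argument on the scalar function $g(\beta) := D(\Phi(\rho)\|\sigma_\beta)$. Since $\sigma_0 = I/2^n = \sigma$, one has $g(0) = D(\Phi(\rho)\|\sigma) = n - S(\Phi(\rho))$, and $g(\beta) \geq 0$ for every $\beta \geq 0$ by Klein's inequality. Reparametrizing $\beta = \lambda/(\|H\|\epsilon)$ matches the statement of the lemma, so the goal becomes: exhibit $\beta_* \in [0, g(0)/(\|H\|\epsilon)]$ with $\tr(H(\sigma_{\beta_*} - \Phi(\rho))) \leq \|H\|\epsilon$ and set $\tilde\sigma := \sigma_{\beta_*}$.

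The first step I would take is to compute $g'(\beta)$. Substituting $\log \sigma_\beta = -\beta H - \log \mathcal{Z}_\beta$ into the definition of relative entropy gives
\begin{equation*}
g(\beta) = -S(\Phi(\rho)) + \beta\,\tr(H\Phi(\rho)) + \log\mathcal{Z}_\beta,
\end{equation*}
and the standard identity $\partial_\beta \log\mathcal{Z}_\beta = -\tr(H\sigma_\beta)$, which follows from $\partial_\beta e^{-\beta H} = -H e^{-\beta H}$ since $H$ commutes with itself, yields the Gibbs-variational formula
\begin{equation*}
g'(\beta) = \tr(H\Phi(\rho)) - \tr(H\sigma_\beta) = -\tr\bigl(H(\sigma_\beta - \Phi(\rho))\bigr).
\end{equation*}
Thus the energy gap appearing in the lemma is \emph{exactly} $-g'(\beta)$, and the task reduces to finding $\beta_*$ in the prescribed interval where $g'(\beta_*) \geq -\|H\|\epsilon$.

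The final step is a contradiction argument: if $g'(\beta) < -\|H\|\epsilon$ held on the entire interval $[0, g(0)/(\|H\|\epsilon)]$, the fundamental theorem of calculus would force
\begin{equation*}
g\!\Bigl(\tfrac{g(0)}{\|H\|\epsilon}\Bigr) = g(0) + \int_0^{g(0)/(\|H\|\epsilon)} g'(\beta)\,d\beta < g(0) - g(0) = 0,
\end{equation*}
contradicting non-negativity of relative entropy. So some $\beta_*$ in this interval does the job, and $\lambda := \|H\|\epsilon\,\beta_* \in [0, D(\Phi(\rho)\|\sigma)]$ delivers $\tilde\sigma = \exp(-\lambda H/(\|H\|\epsilon))/\mathcal{Z}$ with the claimed bound. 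I do not expect a real obstacle: the only point that deserves care is the differentiation of $\log\mathcal{Z}_\beta$, which is routine here. The conceptual content is just that the ``relative-entropy budget'' $n - S(\Phi(\rho))$ must be spent gradually along the Gibbs curve, so at least one $\beta$ must spend it slowly enough to certify a near-minimal energy gap between $\Phi(\rho)$ and the corresponding thermal state.
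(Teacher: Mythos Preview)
Your proof is correct and essentially identical to the paper's argument for the maximally mixed case (Corollary~\ref{cor:variationalmaxmixed} in the supplement): both compute $\partial_\beta \log\mathcal{Z}_\beta=-\tr(H\sigma_\beta)$ and combine it with the nonnegativity of $D(\Phi(\rho)\|\sigma_\beta)$ via a mean-value argument along the Gibbs curve. The paper phrases nonnegativity as the variational formula for relative entropy (Prop.~\ref{prop:variationalentropy}) and then applies the mean value theorem to $f(\beta)=\log\mathcal{Z}_\beta$, whereas you run the fundamental theorem of calculus directly on $g(\beta)=D(\Phi(\rho)\|\sigma_\beta)$ by contradiction---but since $g(\beta)=\beta\,\tr(H\Phi(\rho))+f(\beta)-S(\Phi(\rho))$, these are the same computation.
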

This means that for every (noisy) quantum circuit output $\Phi(\rho)$, solving the optimization problem with cost function $H$, one can assign a Gibbs state of temperature
\begin{align}\label{equ:goodhamiltonian}
    \beta=\frac{\lambda}{\|H\|\epsilon}
\end{align}
which provides a solution that approximates with $\epsilon\|H\|$ accuracy that of the quantum circuit. Note that typically an absolute error of $\epsilon\|H\|$ reflects a relative error of $\epsilon$ of the energy.
Results like Lemma~\ref{thm:concentration} allow us to bound $\lambda$ given the depth. 

In subsection ~\ref{subsubsec:epsilon}
we argue why we should take $\epsilon$ to be approximately $10^{-1}-10^{-2}$ for the results we present here,
and in Sec.~\ref{sec:GeneralCircuits} we generalize to noisy channels with fixed points other than maximally mixed qubits. It is important to stress that our analysis is fundamentally different from analysing how close the output state is to the maximally mixed in trace distance, as discussed in more detail in Sec.~\ref{subsubsec:logNscaling} of the supplemental material. 

\subsection{Certifying classical superiority}
\label{subsec:lowerbound}
Exploiting the variational formulation of the relative entropy it possible to 
lower-bound the energy of the output of the noisy quantum device:
\begin{prop}\label{thm:lowerbound}
Let $\Phi(\rho)$ be the output of a noisy quantum device and $\sigma=I/2^n$ the fixed point of the noise. Then $\Phi(\rho)$ satisfies:
\begin{align}\label{equ:lowerboundenergy}
&\tr\lb \Phi(\rho) H\rb\nonumber\\&\geq\sup\limits_{\beta>0}\beta^{-1}\lb n-\log(\tr\lb e^{-\beta H}\rb)-D(\Phi(\rho)||\sigma)\rb.
\end{align}
\end{prop}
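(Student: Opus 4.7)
The plan is to reduce the lower bound to a single application of the Gibbs variational principle (equivalently, to non-negativity of the relative entropy between $\Phi(\rho)$ and the Gibbs state of $H$ at inverse temperature $\beta$), and then rewrite the entropy of $\Phi(\rho)$ in terms of $D(\Phi(\rho)\|\sigma)$ using that $\sigma = I/2^n$ is maximally mixed.

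Concretely, first I would fix an arbitrary $\beta > 0$ and consider the Gibbs state $\sigma_\beta = e^{-\beta H}/Z_\beta$ with $Z_\beta = \tr(e^{-\beta H})$. Since relative entropy between two states is non-negative, $D(\Phi(\rho)\|\sigma_\beta) \geq 0$. Expanding the definition using $\log \sigma_\beta = -\beta H - (\log Z_\beta) I$ yields
\begin{equation*}
0 \leq -S(\Phi(\rho)) + \beta\,\tr(\Phi(\rho) H) + \log Z_\beta,
\end{equation*}
which is just the standard statement that the free energy of $\sigma_\beta$ lower-bounds that of any other state. Rearranging gives
\begin{equation*}
\tr(\Phi(\rho) H) \geq \beta^{-1}\bigl(S(\Phi(\rho)) - \log Z_\beta\bigr).
\end{equation*}

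Next I would invoke the identity that is already highlighted in Lemma~\ref{thm:mirrordescent}: for $\sigma = I/2^n$,
\begin{equation*}
D(\Phi(\rho)\|\sigma) = n - S(\Phi(\rho)),
\end{equation*}
so $S(\Phi(\rho)) = n - D(\Phi(\rho)\|\sigma)$. Substituting this into the previous inequality produces exactly
\begin{equation*}
\tr(\Phi(\rho) H) \geq \beta^{-1}\bigl(n - \log \tr(e^{-\beta H}) - D(\Phi(\rho)\|\sigma)\bigr).
\end{equation*}
Since this holds for every $\beta > 0$, taking the supremum yields the claimed bound.

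I do not anticipate any real obstacle: the content of the proposition is essentially the Gibbs variational principle combined with the particular form of the relative entropy to the maximally mixed state. The only subtlety worth a sentence is the choice of logarithm base (so that the "$n$" on the right-hand side indeed matches $\log_2 \dim \mathcal{H}$ and is consistent with the convention used in Lemma~\ref{thm:mirrordescent}); this only rescales $\beta$ and does not affect the validity of the supremum bound.
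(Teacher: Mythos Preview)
Your proof is correct and is essentially the same argument as the paper's: the paper invokes Petz's variational formulation of $D(\Phi(\rho)\|\sigma)$ and plugs in the trial state $\omega=e^{-\beta H}/\mathcal{Z}_\beta$, which for $\sigma=I/2^n$ reduces exactly to the inequality $D(\Phi(\rho)\|\sigma_\beta)\geq 0$ that you use. The only cosmetic difference is that the paper phrases the step via the general variational formula (so that it also yields Prop.~\ref{prop:variationalentropy} for arbitrary full-rank $\sigma$), whereas you go straight to the Gibbs variational principle in the maximally mixed case.
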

Thus, if we can approximate the partition function for some values of $\beta$ and bound the relative entropy with a relative entropy convergence method, we immediately obtain a lower bound on the achievable energy of the circuit through~\eqref{equ:lowerboundenergy}, which can be 
used to certify that the output of a given classical method outperforms the noisy quantum device 
solving the same problem over a given number of computation cycles $D$.
In Sec.~\ref{sec:variational} of the supplemental material we present a detailed proof of this proposition and its generalization beyond the maximally mixed state, while also discussing the scaling of the bound in more detail. But the best way of obtaining an intuitive grasp of its consequences is to perform a Taylor expansion of $f(\beta)=\log(\tr\lb e^{-\beta H}\rb)$ around $\beta=0$. Assuming that $f$ is analytical for a neighbourhood around $0$, we see that the bound in Eq.~\eqref{equ:lowerboundenergy} reads that for any $\beta$
\begin{align*}
    \tr\lb \Phi(\rho) H\rb\geq \tr\lb \sigma H\rb-\cO(\beta f^{(2)}(0))- \beta^{-1}D(\Phi(\rho)||\sigma),
\end{align*}
where $f^{(2)}(0)=2^{-n}\tr\lb H^2\rb-2^{-2n}\tr\lb H\rb^2$. 
One can readily check that for $\kappa$-local Hamiltonians we have $f^{(2)}(0)=\cO(\kappa n)$. We conclude that if $D(\Phi(\rho)||\sigma)\leq \epsilon^{2}n$, then picking $\beta=\epsilon$ yields $\tr\lb \Phi(\rho) H\rb\geq\tr\lb \sigma H\rb-\cO(\kappa\epsilon n)$.  That is, the output energy density will be close to the one of the fixed point.


\section{Classical Ising Hamiltonian problems}
\label{sec:IsingH}

By mapping NP-complete optimization problems such as MAXCUT to Ising Hamiltonian problems~\cite{Lucas_2014}, we aim at
finding the ground state and energy of the Ising Hamiltonian 
\begin{align}\label{equ:isingmodel}
    H_I=-\sum\limits_{i\sim j}a_{i,j}Z_iZ_j-\sum\limits_{i}b_iZ_i.
\end{align}
defined on  a graph $G=(V,E)$ with maximum degree $\Delta$ with $n$ vertices, where $i\sim j$ stands for the edge of graph $G$
between vertices $i$ and $j$ and $a_{i,j},b_i$ real. 
Monte Carlo algorithms to sample from $\sigma_\beta$ for $\Delta$-regular graphs have been extensively studied, see e.g.~\cite[Chapter 15.1]{levin2017markov}. 
There is a vast literature dedicated to determining for which range of $\beta$ it is possible to sample efficiently depending on the structure of the coupling coefficients and graph. To the best of our knowledge, the state of the art general condition that asserts rapid mixing for Ising models is the recent~\cite[Theorem 11]{2007.08200}, which reads that as long as 
\begin{align}\label{equ:conditionfastmixing1}
   \beta\|A\|<1,
\end{align}
where $(A)_{i,j}=a_{i,j}$ is the $n\times n$ coupling matrix, it is possible to approximately sample from the Gibbs state $\sigma_\beta$ in polynomial time. For instance, if we have a $\Delta$-regular graph and $|a_{i,j}|\leq1$, note that $\|A\|\leq\Delta$ and the bound above yields $\beta<\Delta^{-1}$. On the other hand, for the SK model, where we have a complete interaction  graph and $a_{i,j}\sim\mathcal{N}(0,n^{-1})$,~\cite{2007.08200} shows that rapid mixing still holds at $\beta<1/4$. 

\subsection{Bounding the number of cycles}\label{sec:numberofcycles}
Let us now bound the maximum depth for MAXCUT on a noisy computer and $\Delta$-regular graph. 
Let's consider a circuit of depth $D$, where $f_1$ is the fraction of layers of single qubits gates with error $p_1$, $f_2$ the fraction of layer of two-qubit gate of error $p_2$, and $p_m$ the probability of error of the measurements.  Replacing every ideal component by its noisy counterpart followed by local depolarizing noise with rate $p_i$, combining Lemma \ref{thm:mirrordescent}, eq.~\eqref{equ:goodhamiltonian}, the bound eq.~\eqref{equ:conditionfastmixing1}, and the approximation $-\log(1-p)\simeq p^{-1}$ for small $p$ we obtain:
\begin{prop}\label{prop:ising}
Let $H_I$ be an Ising Hamiltonian as in Eq.~\eqref{equ:isingmodel}. Moreover, let $\Phi(\rho)$ be the output of a noisy quantum circuit on $n$ qubits consisting of $D$ layers of unitaries. Suppose further that the circuit consists of a fraction of $f_1$ layers of one-qubit gates and $f_2$ layers of $2$-qubit gates, affected by local depolarizing noise with probability $p_1,p_2$ respectively and the measurement is affected with depolarizing probability $p_m$. Then for $D\geq D_{\max}$ with
\begin{align}\label{equ:boundD}
    D_{\max}=\frac{\log\epsilon^{-1}+\log\lb \|H_I\|^{-1}\|A\|n\rb-p_m}{2(f_1p_1+f_2p_2)},
\end{align}
there is a Gibbs state $\tilde{\sigma}$ satisfying 
\begin{align*}
    \tr\lb H_I\lb \tilde{\sigma}-\Phi(\rho)\rb\rb\leq  \epsilon\|H\|.
\end{align*}
that can be sampled from in polynomial time.
\end{prop}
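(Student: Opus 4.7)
The plan is to chain together the three ingredients already available in the paper: the entropic contraction under depolarizing noise (Lemma~\ref{thm:concentration}), the mirror-descent identification with a Gibbs state (Lemma~\ref{thm:mirrordescent}), and the rapid-mixing criterion for Ising models (\ref{equ:conditionfastmixing1}), and then solve for the depth $D$ that makes the resulting inverse temperature $\beta$ fall into the classically tractable regime.

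First I would extend the entropic contraction bound to the present heterogeneous setting. Since strong data processing inequalities compose multiplicatively under concatenation of channels and each depolarizing layer with probability $p$ contributes a factor $(1-p)^2$ via Lemma~\ref{thm:concentration}, applying this layer by layer to a circuit with $Df_1$ single-qubit-gate layers of noise $p_1$, $Df_2$ two-qubit-gate layers of noise $p_2$, and a final measurement step modeled as a depolarizing channel of rate $p_m$, I would obtain
\begin{equation*}
D(\Phi(\rho)\|\sigma)\leq (1-p_1)^{2Df_1}(1-p_2)^{2Df_2}(1-p_m)\,n.
\end{equation*}
Lemma~\ref{thm:mirrordescent} then guarantees the existence of $\lambda\in[0,D(\Phi(\rho)\|\sigma)]$ such that the Gibbs state $\tilde\sigma=e^{-\beta H_I}/\mathcal{Z}$ with $\beta=\lambda/(\|H_I\|\epsilon)$ satisfies the promised energy inequality $\tr(H_I(\tilde\sigma-\Phi(\rho)))\leq \epsilon\|H_I\|$.

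Next I would impose the rapid-mixing condition $\beta\|A\|<1$ from \eqref{equ:conditionfastmixing1}. Since $\beta\leq D(\Phi(\rho)\|\sigma)\|A\|/(\|H_I\|\epsilon\|A\|^{-1}\cdot\text{(nothing)})$ — more cleanly, since it suffices that $D(\Phi(\rho)\|\sigma)\,\|A\|/(\|H_I\|\epsilon)<1$ — I would substitute the contraction bound to get the sufficient condition
\begin{equation*}
(1-p_1)^{2Df_1}(1-p_2)^{2Df_2}(1-p_m)\,n\,\|A\|<\|H_I\|\epsilon.
\end{equation*}
Taking logarithms, using $-\log(1-p)\simeq p$ for the small per-layer depolarizing rates (and keeping a single $-p_m$ term for the measurement factor, which is where the lone $-p_m$ in \eqref{equ:boundD} comes from), and solving for $D$ would yield precisely
\begin{equation*}
D\geq\frac{\log\epsilon^{-1}+\log(\|H_I\|^{-1}\|A\|n)-p_m}{2(f_1p_1+f_2p_2)}=D_{\max}.
\end{equation*}

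Finally, I would note that under this condition $\beta\|A\|<1$ so the Gibbs state $\tilde\sigma$ constructed by the mirror-descent lemma lies in the rapidly mixing regime of the Glauber dynamics for $H_I$, and therefore by the cited result \cite{2007.08200} it can be approximately sampled from in polynomial time; together with the energy closeness this finishes the proof. The main obstacle I foresee is not any single step but the careful bookkeeping when composing noise layers of different types (including the measurement layer) and justifying that the linearization $-\log(1-p_i)\approx p_i$ may be used in the denominator without losing anything relevant, since the statement is otherwise a straightforward assembly of Lemmas~\ref{thm:concentration} and \ref{thm:mirrordescent} with \eqref{equ:conditionfastmixing1}.
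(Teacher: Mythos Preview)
Your proposal is correct and follows essentially the same argument the paper sketches in the paragraph preceding Proposition~\ref{prop:ising}: combine the layer-by-layer contraction of Lemma~\ref{thm:concentration}, the inverse-temperature bound from Lemma~\ref{thm:mirrordescent} via Eq.~\eqref{equ:goodhamiltonian}, the rapid-mixing condition~\eqref{equ:conditionfastmixing1}, and the linearization $-\log(1-p)\simeq p$, then solve for $D$. The only cosmetic points are the slightly garbled intermediate expression you immediately fix, and the choice of a single $(1-p_m)$ factor for the measurement layer (rather than $(1-p_m)^2$), which is precisely what is needed to reproduce the lone $-p_m$ in Eq.~\eqref{equ:boundD}.
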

 For instance, for $\Delta$ regular graphs with $a_{i,j}\in\{0,1\}$ the energy of the ground state is of order $\Delta n$. As $\|A\|\leq \Delta$, up to a relative error of $\epsilon$ we can sample from a comparable state after at most $\log(\epsilon^{-1})/2(f_1p_1+f_2p_2)$ depth.

We have now everything at hand to make some estimates on the capability for
realistic architectures to solve practical problems. We will assume the values
$p_1=1.6\times10^{-3}$, $p_2=6.2\times10^{-3}$ and $p_m=3.8\times10^{-2}$ taken from 
the recent experiment \cite{Arute2019}. A simple calculation using eq.~(\ref{equ:boundD}) with $\epsilon=10^{-1}$ shows that those numbers limit the depth of any quantum circuit to $D_c\approx 300$ for circuits dominated by $2-$qubit gate layers and $D_c\approx 700$ for $1-$qubit gate dominated circuits. 

Most current quantum computing architectures have a planar design. However, usually
interesting optimization problems are defined over non-planar graphs, which imposes the need for
SWAP circuits that consume a lot of cycles of computation.
For example, the SK model studied in \cite{Arute2019} requires $7n$ layers cycles of computation per application of the QAOA unitary and implements 3 layers of QAOA. 
The bound in Eq.~\eqref{equ:boundD}, for which we have $\beta_c\geq1/4$, predicts a limit of roughly $20$ qubits before their systems lose any potential advantage at three rounds.
Indeed, experimental observations show that for systems of size 17 their device is not better than
random guessing. 

Furthermore, it is also possible to approximately compute the partition function for the aforementioned $\beta$ range. We evaluated the bound in Eq.~\eqref{equ:lowerboundenergy} for several instances of the SK model on up to $30$ qubits and consistently observed that already depths $D_c\approx 150$ are outperformed by Gibbs states in the polynomial-time range. Translating this bound to number of qubits, it predicts that our classical method should outperform the noisy quantum computer with more than $10$ qubits. In Sec.~\ref{sec:variationalexample} of the supplementary material we explain how we perform such computations in more detail and that the variational bound allows for comparison to any other algorithm to approximately solve optimization problems beyond Gibbs sampling. Also note that quantum computers of this size or smaller can be easily simulated on a laptop.

We remark that equation~\eqref{equ:boundD} has the advantage of providing a rigorous guarantee
for arbitrary system sizes and instances, where~\eqref{equ:lowerboundenergy} 
provides sharper estimates for specific instances of a problem. 
Thus, both methods seem to complement each other in providing a good bound on the range of 
cycles of computation a noisy quantum computers can sustain 
before it is outperformed by polynomial time classical methods.

Even for optimization problems on sparser graphs, such as the MAXCUT problems
also studied in \cite{Arute2019}, the depth of every QAOA layer is proportional 
to $n$. From our previous discussion, we readily infer that, as long as the depth of each QAOA layer scales with $n$, also for sparser problems the noise rate would have to decrease roughly two orders of magnitude, well below the fault-tolerant threshold, for quantum computers to have a chance of being competitive with classical methods. This is because current heuristic methods yield good results for graphs with numbers of nodes in the $10^3-10^4$ range~\cite{Kochenberger_2011,Dunning_2018} and state of the art SDP solvers can solve $10^4$-node instances in seconds on a laptop~\cite{yurtsever2019scalable}.


\section{Variational Quantum Eigensolvers}
\label{sec:VQE}
 
Variational quantum eigensolvers (VQE) are one of the most studied families of quantum algorithms for near-term quantum devices \cite{moll2017quantum, kandala2017hardware, wang2019accelerated}. This family of hybrid classical-quantum algorithms 
supplements a traditional variational technique on a classical computer with
a call to a quantum circuit that encodes the quantum state and measures its energy.
The technique allows to circumvent the cost of encoding and evolving quantum states in
a traditional classical computer, but suffers from similar limitation as any traditional
variational approach, among them the fact that we restrict the set of  solution to
a manifold of the Hilbert space. Here we exemplify how to derive limitations on VQE for finding the ground state of local Hamiltonians, although it is possible to extend the results here to other classes of problems.

There are several results and techniques for the classical approximation of quantum Gibbs states~\cite{2002.02232,1910.09071,Kuwahara_2020,PhysRevLett.93.207204,RevModPhys.73.33,PhysRevLett.97.187202,Tang_2013,PhysRevLett.102.190601,PhysRevB.91.045138}.
For instance, in the recent~\cite[Corollary 21]{1910.09071} the authors show that on a Hamiltonian on a $d$ dimensional lattice and $\kappa$ local interactions of strength at most $J$,  for all inverse temperatures  $\beta\leq \beta_c$ with 
\begin{align}
\beta_c \geq \lb 5e \kappa d^\kappa J\rb^{-1},
\end{align}
we can approximately compute the partition function of $\mathcal{Z}_\beta=e^{-\beta H}$ up to a multiplicative error $\delta$ in $\cO(n^{\log(n \delta^{-1})})$ time. Let us normalize $\|H\|$ in such a way that $\kappa J n\leq\|H\|\leq2\kappa Jn$. 
By our choice of normalization of $\|H\|$ we obtain the following proposition:
\begin{prop}\label{prop:depth_classical}
Let $H$ be a local Hamiltonian on a $d$ dimensional lattice with $\kappa$ local interactions of strength at most $J$. Moreover, let $\Phi(\rho)$ be the output of a noisy quantum circuit on $n$ qubits consisting of $D$ layers of unitaries interspersed with local depolarizing channels with depolarizing rate $p$. Then for $D\geq D_{\max}$ with
\begin{align}\label{equ:maximumdepth}
   D_{\max}= \frac{\log\lb 20 e\lb d^\kappa  \epsilon\rb^{-1}\rb}{2p}
\end{align}
there is a Gibbs state $\tilde{\sigma}$ satisfying 
\begin{align*}
    \tr\lb H\lb \tilde{\sigma}-\Phi(\rho)\rb\rb\leq 2\kappa Jn \epsilon.
\end{align*}
whose energy w.r.t. $H$ can be computed in quasi-polynomial time.
\end{prop}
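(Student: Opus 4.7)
The plan is to stitch together the three ingredients that the paper has already lined up: the mirror-descent lemma (Lemma~\ref{thm:mirrordescent}), the entropic contraction bound for local depolarizing noise (Lemma~\ref{thm:concentration}), and the classical Gibbs-sampling result cited from \cite{1910.09071}, which guarantees a quasi-polynomial algorithm whenever $\beta\le\beta_c:=(5e\kappa d^\kappa J)^{-1}$. The proof is then essentially a matter of propagating these inequalities and turning the resulting condition on $\beta$ into a condition on $D$.

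First I would apply Lemma~\ref{thm:mirrordescent} to $\Phi(\rho)$ and the given $H$. This produces a $\lambda\in[0,D(\Phi(\rho)\|\sigma)]$ such that $\tilde\sigma=\exp(-\lambda H/(\|H\|\epsilon))/\mathcal{Z}$ satisfies $\tr(H(\tilde\sigma-\Phi(\rho)))\le\|H\|\epsilon$. The upper normalization $\|H\|\le 2\kappa Jn$ then converts this directly into the claimed bound $\tr(H(\tilde\sigma-\Phi(\rho)))\le 2\kappa J n\epsilon$, so the energy-approximation part of the statement is free once the mirror-descent Gibbs state is identified.

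The next step is to control the inverse temperature $\beta=\lambda/(\|H\|\epsilon)$ of $\tilde\sigma$ in terms of $D$. By Lemma~\ref{thm:concentration}, $\lambda\le D(\Phi(\rho)\|\sigma)\le(1-p)^{2D}n$. Combining with the lower normalization $\|H\|\ge\kappa Jn$ yields
\begin{align*}
\beta \;\le\; \frac{(1-p)^{2D}}{\kappa J\epsilon}.
\end{align*}
To land inside the quasi-polynomial sampling region I require the right-hand side to be at most $\beta_c=(5e\kappa d^\kappa J)^{-1}$, which reduces to $(1-p)^{2D}\le\epsilon/(5ed^\kappa)$. Using the elementary estimate $-\log(1-p)\ge p$, this is implied by $2Dp\ge\log(5ed^\kappa/\epsilon)$, which is in turn implied (with slack) by $D\ge D_{\max}$ as stated in \eqref{equ:maximumdepth}. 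At that point the cited corollary of \cite{1910.09071} supplies the quasi-polynomial algorithm for $\tilde\sigma$ and the proposition is established.

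The only genuinely delicate point is the book-keeping of the normalization: one must be careful to use $\|H\|\ge\kappa Jn$ when turning $\lambda$ into an upper bound on $\beta$ (in order to reach $\beta_c$), but $\|H\|\le 2\kappa Jn$ when translating the mirror-descent error into the form $2\kappa J n\epsilon$. No additional analytic tools are required; the looser numerical constant $20e$ appearing in $D_{\max}$ (versus the $5e$ that falls out of a direct substitution) simply absorbs the $-\log(1-p)\simeq p$ approximation and the asymmetric use of the two normalization inequalities, and gives a clean, safe depth threshold.
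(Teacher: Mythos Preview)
Your approach is correct and is exactly the combination the paper intends: the proposition is stated without a separate proof, as an immediate consequence of Lemma~\ref{thm:concentration}, Lemma~\ref{thm:mirrordescent}, the normalization $\kappa Jn\le\|H\|\le 2\kappa Jn$, and the cited bound $\beta_c\ge(5e\kappa d^\kappa J)^{-1}$. Your book-keeping of which side of the normalization to use at each step is also right.

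One point you should flag rather than paper over: the condition you actually derive is $2Dp\ge\log(5e\,d^\kappa\epsilon^{-1})$, with $d^\kappa$ in the \emph{numerator}, whereas the displayed $D_{\max}$ in~\eqref{equ:maximumdepth} carries $(d^\kappa\epsilon)^{-1}$, i.e.\ $d^{-\kappa}\epsilon^{-1}$. These are not equivalent once $d^\kappa>2$, and your version is the one that is internally consistent (smaller $\beta_c$ for larger $d^\kappa$ means \emph{more} depth is needed, not less). The mismatch is a typographical slip in the stated formula, not a defect in your argument; but your final sentence, that $D\ge D_{\max}$ as printed implies your condition ``with slack'', is simply false in general and should be replaced by a remark that the exponent of $d^\kappa$ in~\eqref{equ:maximumdepth} has the wrong sign.
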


This upper-bound on the maximum depth of a VQE algorithm implies also a limitation on the maximum correlation length $\xi$ of ground states we can prepare with a near-term device. As detailed in Section~\ref{sec:corlength} of the supplementary material, a light cone argument can be used to argue that whenever $D<\xi$ the quantum circuit will provide a solution that is far in trace distance from the actual ground state. 
Thus, the noise imposes limits on the correlation length present in
the quantum ansatz generated by near-term devices.

Comparing the bound in Eq.~\eqref{equ:boundD} and Eq.~\eqref{equ:maximumdepth}, we see that they exhibit the same scaling. Thus, we believe that this warrants similar conclusions: as long as the depth of a VQE layer scales with the system size, substantially decreasing the noise is more important than increasing system size. However, as quantum Hamiltonians are inherently more complex than Ising problems, and also less studied, 
smaller system sizes and higher noise rates might be tolerable for a quantum advantage demonstration.


\section{Noisy quantum annealers}

Quantum annealers have been proposed as an alternative architecture to the circuit model of computation 
to solve classical Ising problems~\cite{Johnson_2011}. 
They follow a quantum computation approach to optimization of Ising Hamiltonians aiming at 
preparing the ground state $\ket{\psi_I}$ of the classical Hamiltonian $H_I$ of eq.~\eqref{equ:isingmodel} adiabatically.
We start by first preparing the ground state of $H_0=- \sum\Gamma_i X_i$, which
is easily seen to be the state $\ket{\psi_0}=\ket{+}^{\otimes n}$. 
Then by adiabatically shifting from the Hamiltonian $H_0$ toward $H_I$, through the family
of transverse-field Ising Hamiltonians
\begin{align*}
    H_s=g_I(s)H_I+g_0(s)H_0
\end{align*}
where $g_0,g_I$ are smooth functions of $s$ s.t. $g_0(0)=A,g_I(0)=0$ and $g_0(T)=0,g_I(T)=B$, we converge to $\ket{\psi_I}$ by evolving under the
Hamiltonian $H_s$over time $T$.

The spectral gap of $H_s$ along the curve determines how slow the evolution has to be to guarantee a good solution. Furthermore, to the best of our knowledge, most rigorous theorems 
yield a runtime that scales at least linearly in the number of qubits and inverse quadratically on the gap~\cite{Jansen_2007}. There are, however, heuristic bounds that only scale quadratically inverse with the gap ~\cite{PhysRevLett.102.220401}.
We refer to~\cite{RevModPhys.90.015002} for a discussion of adiabatic quantum computation and these issues.

\subsection{A continuous time contraction bound}

Quantum adiabatic computation being a continuous time computational model, we need to
generalize Lemma 1 to this framework. 
Therefore, we assume that the evolution of the system at time $s$ is described by the time-dependent Lindbladian
\begin{align*}
    \cS_s(\rho)=-i[\rho,H_s]+\cL(\rho).
\end{align*}
Here $\cL$ is a purely dissipative term, which we assume to be time-independent and that is supposed to model the thermalization of the system.
The time evolution of the system at time $t$ is described by the quantum channel $\cT_t$ given by
\begin{align*}
    \cT_t=\lim\limits_{n\to \infty}\bigcirc_{l=1}^n e^{\frac{t}{n} \cS_{t_l}}.
\end{align*}
We will consider a model of local noise combining amplitude damping, dephasing and control error for which 
the fixed point of the Lindbladian $\cS_s$ is the product state $\sigma=\sigma_\gamma^{\otimes n}$, where
$\sigma_\gamma=(e^{\gamma}+e^{-\gamma})^{-1}e^{-\gamma Z}$.

The continuous time version of the relative entropy decay of Eq.~\eqref{equ:strong-data-proc} is that $\cL$ satisfies a \emph{modified logarithmic Sobolev inequality} (MLSI) with constant $\alpha>0$~\cite{Kastoryano_2013,Olkiewicz_1999}. This implies that $\cL$ satisfies
\begin{align}\label{equ:MLSI}
    D\left(e^{t\cL}(\rho)||\sigma\right)\leq e^{-\alpha t}D\left(\rho||\sigma\right).
\end{align}
for all $n-$qubit states $\rho$ and times $t\geq0$ and a state $\sigma>0$. 

Exploiting the generalization to an arbitrary fixed point of Lemma~\ref{thm:mirrordescent}
in Section \ref{sec:GeneralCircuits} and its adaptation to the continuous evolution setting
in Section \ref{sec:SMquann}, both in the supplemental material, we obtain:
\begin{theorem}
Let  $\cS_s(\rho)=-i[\rho,H_s]+\cL(\rho)$ be a time-dependent Linbladian  s.t. $\cL$ satisfies MLSI with constant $\alpha>0$. Then the evolution from time $0$ to $t$ satisfies $D(\mathcal{T}_t(\rho)||\sigma)\leq e^{-\alpha t}D\left(\rho||\sigma\right)+X(t,H_s,\sigma)$,
where
\begin{align}\label{equ:entroadiabatic}
X(t,H_s,\sigma)=\int\limits_{0}^t d\tau e^{-\alpha(t-\tau)}\|\sigma^{-\frac{1}{2}}[H_t,\sigma]\sigma^{-\frac{1}{2}}\|.
\end{align}
\end{theorem}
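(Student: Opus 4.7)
The plan is to differentiate $D(\rho_t\|\sigma)$ along the evolution $\rho_t:=\cT_t(\rho)$, split the time derivative into dissipative and unitary contributions, bound each by the available inputs, and close with variation of constants. Using $\dot\rho_t = -i[H_t,\rho_t]+\cL(\rho_t)$, the time-independence of $\sigma$, and the standard trace derivative formula, one obtains $\tfrac{d}{dt}D(\rho_t\|\sigma) = \tr[\dot\rho_t(\log\rho_t-\log\sigma)]$. The unitary piece paired with $\log\rho_t$ vanishes because $\rho_t$ commutes with $\log\rho_t$, while cyclicity rewrites the piece paired with $\log\sigma$ as $i\tr[\rho_t[H_t,\log\sigma]]$, yielding
\begin{equation*}
\frac{d}{dt}D(\rho_t\|\sigma) = \tr\bigl[\cL(\rho_t)(\log\rho_t-\log\sigma)\bigr] + i\tr\bigl[\rho_t[H_t,\log\sigma]\bigr].
\end{equation*}

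For the dissipative term, differentiating the MLSI hypothesis~\eqref{equ:MLSI} applied to $\rho_t$ under the auxiliary flow $e^{s\cL}$ at $s=0$ gives $\tr[\cL(\rho_t)(\log\rho_t-\log\sigma)]\leq -\alpha D(\rho_t\|\sigma)$. For the Hamiltonian term, $i[H_t,\log\sigma]$ is Hermitian, so its contribution is bounded in absolute value by $\|[H_t,\log\sigma]\|$. To convert this to the form in the statement I would use the integral representation
\begin{equation*}
[H,\log\sigma] = \int_0^\infty (\sigma+s)^{-1}[H,\sigma](\sigma+s)^{-1}\,ds
\end{equation*}
together with the factorization $(\sigma+s)^{-1}[H,\sigma](\sigma+s)^{-1} = f_s(\sigma)\bigl(\sigma^{-1/2}[H,\sigma]\sigma^{-1/2}\bigr)f_s(\sigma)$ with $f_s(\lambda):=\sqrt{\lambda}/(\lambda+s)$. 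The map $\Phi_\sigma(X) := \int_0^\infty f_s(\sigma)Xf_s(\sigma)\,ds$ is manifestly completely positive, and unital because $\int_0^\infty \sigma(\sigma+s)^{-2}\,ds = I$; since unital CP maps on matrices are operator-norm contractions, $\|[H_t,\log\sigma]\| \leq \|\sigma^{-1/2}[H_t,\sigma]\sigma^{-1/2}\|$.

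Combining the two estimates yields the differential inequality
\begin{equation*}
\tfrac{d}{dt}D(\rho_t\|\sigma) \leq -\alpha D(\rho_t\|\sigma) + \|\sigma^{-1/2}[H_t,\sigma]\sigma^{-1/2}\|,
\end{equation*}
so that multiplying by the integrating factor $e^{\alpha t}$ and integrating from $0$ to $t$ produces the claimed bound by variation of constants. The main anticipated obstacle is analytic rather than algebraic: since $\cT_t$ is defined only as a strong limit of piecewise-constant time-ordered exponentials, one must first approximate $H_s$ by a step function, verify the derivative-and-integrate argument for each approximation (where $\rho_t$ is smooth in $t$), and pass to the limit using $s\mapsto \|\sigma^{-1/2}[H_s,\sigma]\sigma^{-1/2}\|$ as an integrable envelope on compact intervals. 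A minor additional regularization $\rho\mapsto(1-\delta)\rho+\delta\sigma$ followed by $\delta\to 0$ handles the case where $\rho$ is singular, ensuring that $\log\rho_t$ is well-defined throughout the computation.
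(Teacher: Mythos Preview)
Your argument is correct and constitutes a genuinely different proof from the paper's. The paper proceeds by first establishing a discrete-time bound via the data-processed triangle inequality $D(P(\rho)\|\sigma)\leq D(\rho\|\sigma')+D_\infty(P(\sigma')\|\sigma)$ of Christandl et al., then Trotterizing $e^{t(\cL+\cH)}$ into alternating pure-noise and pure-unitary slices, Taylor-expanding $D_\infty(e^{(t/m)\cH}(\sigma)\|\sigma)$ to first order in $1/m$ to extract $\|\sigma^{-1/2}[H,\sigma]\sigma^{-1/2}\|$, iterating this over time slices by induction, and finally passing to the continuum limit through a Riemann sum. Your route is more direct: you differentiate $D(\rho_t\|\sigma)$ along the flow, invoke the infinitesimal form of the MLSI for the dissipative part, bound the Hamiltonian contribution by $\|[H_t,\log\sigma]\|$, and then use your integral-representation/unital-CP-contraction argument to reach the stated quantity before closing with Gr\"onwall. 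Your approach is shorter and even yields the intermediate (in general tighter) bound $\|[H_t,\log\sigma]\|$ along the way; the paper's approach has the compensating advantage that its discrete-time lemmas are independently useful for noisy circuits with non-unital noise, and it never needs to differentiate through $\log\rho_t$ or worry about regularity of the time-ordered exponential. One harmless slip: the generator in the statement is $-i[\rho,H_s]=i[H_s,\rho]$, not $-i[H_s,\rho]$ as you wrote, but since you immediately bound the unitary contribution in absolute value the sign is immaterial.
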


\subsection{Noise model}
For current implementations of annealers, amplitude damping, control errors and dephasing are the main sources of noise~\cite{Johnson_2011}.
As shown in the supplementary material in Sec.~\ref{sec:noisemodel}, assuming an amplitude damping rate of $r_1$,
a dephasing $r_2$ and $r_3$ as the standard deviation on the control error on the
parameter $b_i$ and $\Gamma_i$, one can show that the fixed point reads $\sigma=\sigma_\gamma^{\otimes n}$, where
\begin{align}\label{equ:fixedstate}
\sigma_\gamma=\begin{pmatrix}
\frac{r_1+r_3}{r_1+2r_3}& 0 \\
0 &\frac{r_3}{r_1+2r_3}
\end{pmatrix}.
\end{align}
Thus, we see that when amplitude damping dominates the control error, i.e, the regime $r_1\gg r_3$, the fixed point is essentially the $\ket{0}$ state, where for $r_3\gg r_1$ we have a state close to maximally mixed state. One can also obtain a MLSI for $\cL$ with $\alpha(r_1,r_3)=r_1+2r_3$~\cite[Theorem 19]{Beigi_2020}.

\subsection{Example: linear adiabatic path}

Let us exemplify how inequality~\eqref{equ:entroadiabatic} behaves for the path $H_s=\frac{s}{T}H_0+\left(1-\frac{t}{T}\right)H_1$ if we  evolve from the initial state $\rho=\ketbra{+}{+}^{\otimes n}$ under $\cS_s$ from $0$ to $T$.

Using the noise models given in Eq.~\eqref{equ:fixedstate}, we conclude that the relative entropy after evolving the system for time $T$ is bounded by
\begin{align}\label{equ:boundentropyising}
&D(\mathcal{T}_T(\ketbra{+}{+}^{\otimes n})||\sigma)\leq n f(\gamma,r,T,\bar{\Gamma})
\end{align}
where
\begin{align}\label{equ:locadep}
&f(\gamma,r,T,\bar{\Gamma})= e^{-r T}\log(2\cosh(\gamma))\nonumber\\& +\frac{2\sinh(\gamma)\lb 1-e^{-rT}r T-e^{-rT}\rb}{r^2T} \bar{\Gamma},
\end{align}
and $\gamma=\frac{1}{2}\log\lb 1+\frac{r_1}{r_3}\rb$ and $\bar{\Gamma}=n^{-1}\sum_i\Gamma_i$. We refer to Sec.~\ref{sec:computations} of the supplementary material for a detailed derivation.

Using those results we conclude that we need to simulate the classical Gibbs state
\begin{align*}
        \tilde{\sigma}\propto\text{exp}\lb -\gamma \sum\limits_{i=1}^nZ_i-\frac{4\lambda}{\|H_I\|\epsilon}H_I\rb
\end{align*}
for $\lambda \leq f(\gamma,r,T,\bar{\Gamma})n$ to approximate the energy of the output $\mathcal{T}_t(\ketbra{+}{+}^{\otimes n})$ of the noisy quantum computer up to $\epsilon H_I$. Note that $\tilde{\sigma}$ is just the Gibbs state of a classical Ising model.

Thus, the same mixing time of Ising used in section \ref{sec:IsingH} can be used here again. Note that the mixing time bounds in Eq.~\eqref{equ:conditionfastmixing1} is independent of the external field and that adding the term $\sum\limits_{i=1}^nb_i'Z_i$ corresponding to the fixed point amounts to changing the external field strength. Translating this bound to our setting, we see we can sample from $\tilde{\sigma}$ in polynomial time for
\begin{align}\label{equ:classicalrealm}
    &f(\gamma,r,T,\bar{\Gamma})\leq \frac{4\|A\|n}{\|H_I\|}\epsilon.
\end{align}
This puts stringent constraints on the potential of noisy adiabatic quantum computers to outperform classical computers optimizing classical Ising Hamiltonians. 
We exemplify the behaviour of the bound in Fig.~\ref{fig:boundannealer}.

\begin{figure}[h!]
\centering
\includegraphics[width=1\columnwidth]{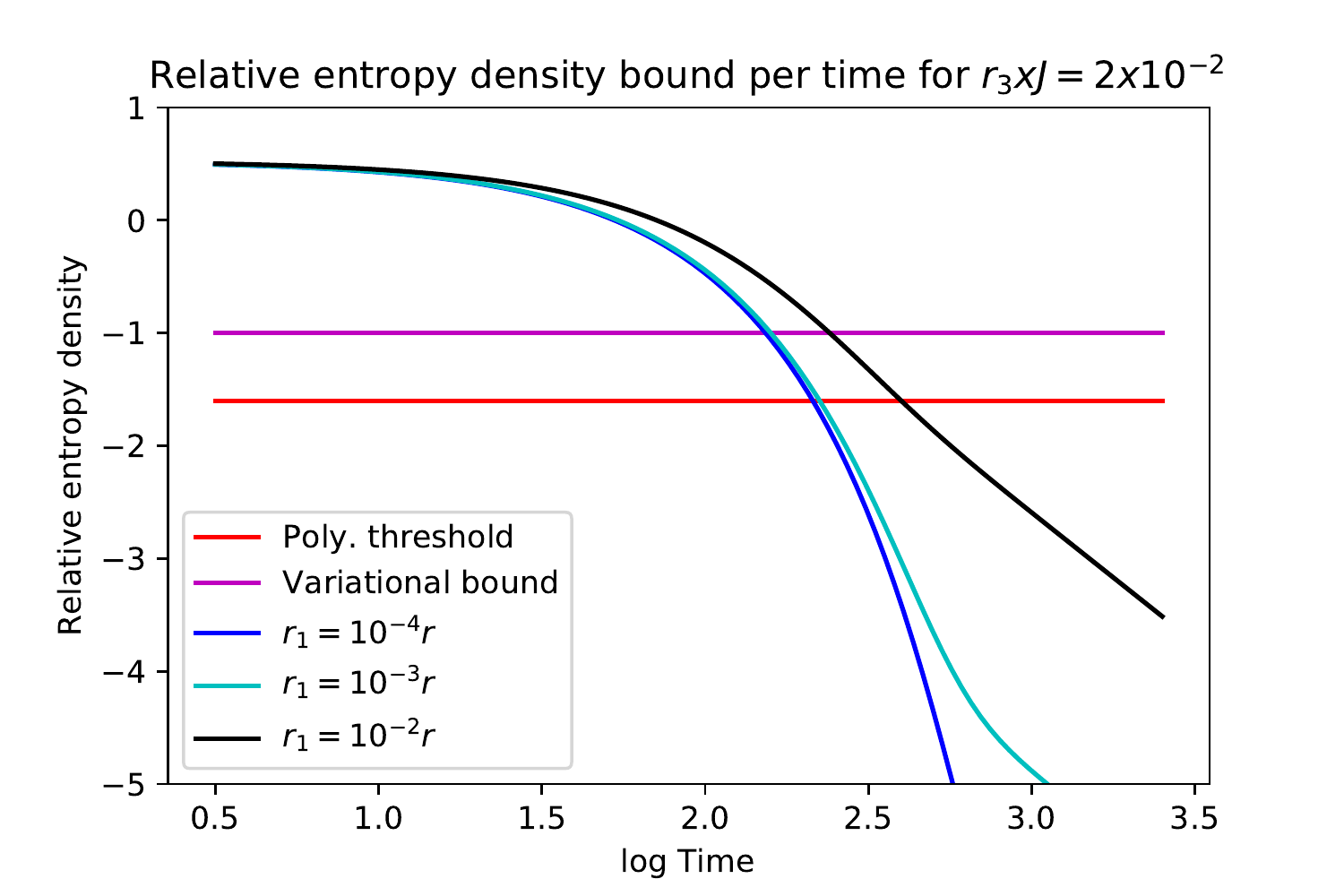}
\caption{Relative entropy density $D(\mathcal{T}_T(\ketbra{+}{+}^{\otimes n})||\sigma)/n$ as bounded by Eq.~\eqref{equ:boundentropyising} for a control error noise rate of $r_3\times J=2\times 10^{-2}$ . The polynomial threshold line depicts when the device provably does not significantly outperform polynomial methods and is based on Eq.~\eqref{equ:classicalrealm}. The variational bound is an extrapolation based on various instances of up to $30$ qubits and is based on Eq.~\eqref{equ:variationalentropy} of the supplemental material. The graph that models the interactions available in the device has maximum degree $\Delta=6$. 
The interaction strengths $a_{i,j},b_i,\Gamma_i$ of the device are unit-less and $\approx1$,
where the modulation functions satisfy $g_i(s)\in[10^{-2},10]$, measured in GHz. 
Specifications of the device \cite[Fig. 3.1, Fig. 3.2]{headquarters2019technical} 
suggest that the standard deviation of the control noise for $a_{i,j},b_i,\Gamma_i$
is not constant in time, but one can see that their product with $g_i(s)$ remains in a reasonable interval
of values always larger than $10$MHz. Therefore, we choose to re-scale the problem to 
$g_0(0)=g_1(T)=1$ and $r_3\times J\simeq 2\times 10^{-2}$, $J$ the maximum interaction strength. 
Unfortunately, the value of $r_1$ is not provided but~\cite{Albash_2015} suggests that $r_1/r_3\in[10^{-4},10^{-2}]$ should cover a realistic parameter range.}
\label{fig:boundannealer}
\end{figure}

Let us now discuss our bound given relevant parameters in current implementations. We will consider a commercially available quantum annealer with the parameters released in 2019~\cite{headquarters2019technical}. The results and technical details of the choice of parameters is shown in Fig.~\ref{fig:boundannealer}.
We conclude that the maximal annealing time for which the device is useful is of order $200\mu s$ and the window for a potential advantage lies on problems that only require a very small annealing time.
This is consistent with the recent findings of a potential advantage of commercial annealers versus the best classical solvers~\cite{Hauke_2020}. For instance, in the recent demonstration of a constant speedup of quantum annealers in~\cite{Mandr__2018}, the authors used the minimal annealing time available for their findings, $5\mu s$. The same is true for the results of~\cite{Denchev_2016}, which used $20\mu s$ annealing times. 
However, it is not clear to what extent this advantage does not result from purely engineering 
reasons disconnected to the presence of quantum effects, such as the high sampling rate on such devices. An argument that would further substantiate the claims of~\cite{Aramon_2019} that a new generation of special-purpose classical annealing chips have the potential to outperform current quantum annealing devices. 
Furthermore, we again conclude that increasing the number of qubits without substantially decreasing the control errors of annealers is unlikely to increase their potential to achieve an unquestionable quantum advantage.

\section{Conclusion and outlook}

In this work we have developed a framework to quantify the effect of noise on near-term quantum computers when performing optimization. It is based on the combination of three main ingredients: firstly, contractivity results that quantify how the noise makes the quantum system deviate from the ideal solution and gets closer to the fixed point of the noise process; secondly, the use of mirror descent that allows us to assign to every step of the quantum computation a Gibbs state achieving the same 
approximation of the cost function; thirdly, exploiting existing efficient algorithms to approximate the specific cost function of the problem at high noise levels. The techniques proved to be extremely versatile and applicable to a large variety of problems, noise models and quantum computing architectures.
Our work confirms that a bound on the depth that scales inversely proportional to the quantum gate error is indeed a universal bound and provides means of estimating the maximum depth rigorously.

Additionally, we have developed a technique to certify that a classical algorithms outperforms a given noisy quantum device running for a certain depth
or number of cycles. This is shown using a nontrivial lower bound on the achievable energies by the noisy quantum device. Interestingly, it can also be used to
provides sharper estimates for specific instances of a problem. 
Thus, a combination of both methods provides a good estimate of the limitations of a noisy near-term
quantum computer.

Our result suggests that there is little chance near-term quantum devices can provide an advantage over classical ones for classical optimization problems, especially if they do not match the architecture's topology. This is because this would result in depths scaling with system size for current algorithmic proposals, which in turn imply that noise rates would have to decrease two orders of magnitude before being potentially competitive against state of the art classical methods.

The problems where near-term devices could have an opportunity window are most probably quantum many-body physics motivated, where the architecture topology has been tailored to the Hamiltonian of the problem and the correlation length in the system is bounded. Whether examples exist that cannot be simulated efficiently classically is, up to our knowledge, an important open problem.

It is important to stress that our bounds guarantee the efficient classical simulation, but
at the price of being extremely conservative. Practical implementations of Gibbs samplers 
are expected to reach higher $\beta$ than predicted by our bounds~\cite{Isakov_2015},
especially using clusters or supercomputers.
Furthermore, it is possible that the relative entropy contracts significantly more for states of interest than predicted by our state-independent bound.
Moreover, relative entropy contraction techniques struggle when the fixed point of the system is close to a pure state, i.e. when amplitude damping is the main source of noise. This can be seen in our bounds for quantum annealers in the $r_1\simeq r_3$ regime for our quantum annealing bound.
Noise that does not uniformly contract the relative entropy, like dephasing, also requires a more detailed analysis, but technical tools are already available~\cite{bardet2018hypercontractivity}. Therefore, improvements to our technique
for both additional types of noises can only improve over the estimates obtained in this work. 

Noise mitigation techniques based on post-processing of the quantum computation measurement 
outcomes, despite being useful to filter the data from noise, would not change the predictions of our work. One can easily reach this conclusion through a simple argument using the data-processing inequality of relative entropy. It would be interesting to see if primitive error correction tools can be used to reduce the effect of noise, in the spirit of~\cite{Bremner_2017}, or how embedding a problem into the annealer topology or quantum annealing correction 
may be both limited by the technique presented here \cite{Job_2018}.

Finally, we believe that the technique developed here can be adapted to analog quantum simulators, 
where our continuous time result can be seen as a first step. This would allow to make predictions about the resilience of 
complex phases of matter to experimental imperfections. Similarly, extending our result 
to Fermionic problems is something we plan to address in future work.

DSF was supported by VILLUM FONDEN via the QMATH Centre of Excellence under Grant No. 10059. RGP was supported by the Quantum Computing and Simulation Hub, an EPSRC-funded project, part of the UK National Quantum Technologies Programme. We thank Heng Guo, Jonah Brown-Cohen and Pierre-Luc Dallaire-Demers for helpful discussions.
\bibliography{mybib}


\section*{Supplementary Information}

\section{Quantum circuits with depolarizing noise}\label{sec:relativentropy}

In this section we will prove and discuss our bounds in the special case of local depolarizing noise. Furthermore, we will review some fundamental properties of the relative entropy more generally and discuss some qualitative aspects of our results, like how they differ from analysing convergence in trace distance to the fixed point. Whenever the proof is essentially the same in the case of the maximally mixed state and more generally, like it is the case for Prop.~\ref{prop:variationalentropy}, we will prove it in full generality.

\subsection{Relative entropy fundamentals}
Most of our results rely on the relative entropy $D(\rho\|\sigma)$ between two states. In this part of the supplemental material  we collect some basic facts about it and then move on to a more detailed proof of Lemma~\ref{thm:mirrordescent}.
First, recall that for a full-rank state $\sigma$ and an arbitrary state $\rho$, the relative entropy $D(\rho\|\sigma)$ is defined as 
\begin{align*}
D(\rho\|\sigma)=\tr\lb \rho(\log(\rho)-\log(\sigma)\rb.
\end{align*}
One of the most important properties of the relative entropy is that it satisfies a data-processing inequality. That is, for any quantum channel $T$ and states $\rho,\sigma$ we have:
\begin{align*}
D(T(\rho)\|T(\sigma))\leq D(\rho\|\sigma).
\end{align*}
In this work we also exploited extensively the case in which the channel contracts the relative entropy strictly, which we call a strong data-processing inequality.

In the main text we exploited upper bounds on the relative entropy with a fixed second argument. The most straightforward way to derive them is through the max-relative entropy.
Indeed, one can show that the relative entropy is upper-bounded by the so-called max-relative entropy $D_\infty$~\cite{M_ller_Lennert_2013,Datta_2009}, which is defined as:
\begin{align*}
D_\infty(\rho\|\sigma)=\log(\|\sigma^{-\frac{1}{2}}\rho\sigma^{-\frac{1}{2}}\|),
\end{align*}
where $\|\cdot\|$ denotes the operator norm.
From this, we see that for all states $\rho$:
\begin{align*}
D(\rho\|\sigma)\leq \log(\|\sigma^{-1}\|).
\end{align*}
Indeed, we have
\begin{align*}
&D(\rho\|\sigma)\leq D_\infty(\rho\|\sigma)\\
& \leq \log(\|\rho\|\|\sigma^{-\frac{1}{2}}\|^2)\leq \log(\|\sigma^{-1}\|),
\end{align*}
where we used the submultiplicativity of the operator norm and the fact that $\|\rho\|\leq 1$ for all states.

We also have the following bound on the max relative entropy:
\begin{align*}
D_{\infty}(\mathcal{U}(\sigma)\|\sigma)\leq \log(\|\sigma^{-1}\|\|\sigma\|)
\end{align*}
for all unitary channels $\mathcal{U}$. This also easily follows from the submultiplicativity of the operator norm, as we have:
\begin{align*}
&D_{\infty}(\mathcal{U}(\sigma)\|\sigma)=\log(\|\sigma^{-\frac{1}{2}}\mathcal{U}(\sigma)\sigma^{-\frac{1}{2}}\|)\\
&\leq \log(\|\sigma^{-1}\|\|\mathcal{U}(\sigma)\|)=\log(\|\sigma^{-1}\|\|\sigma\|).
\end{align*}
If $\sigma$ is a Gibbs state $\sigma=e^{-\beta H}/\mathcal{Z}$, then the inequality above reads:
\begin{align*}
&D_{\infty}(\mathcal{U}(\sigma)\|\sigma)\leq \log( \|e^{-\beta H}\|\|e^{\beta H}\|)\\
&\leq 2\beta \|H\|.
\end{align*}
The last inequality follows from the fact that spectrum of $e^{-\beta H}$ and $e^{\beta H}$ are both contained in the interval $[e^{-\beta \|H\|},e^{\beta \|H\|}]$.

\subsubsection{From relative entropy convergence to trace distance}\label{subsubsec:logNscaling}
Let us now discuss the connection between bounds on the relative entropy and closeness in trace distance. This is established by Pinsker's inequality, which asserts that:
\begin{align*}
D(\rho||\sigma)\geq \frac{1}{2}\|\rho-\sigma\|_{\tr}^2.
\end{align*}
Thus, in particular, it follows from a strong data processing inequality for a channel $T$  with constant $\alpha$ that applying a channel $n$ times to any initial state is enough to ensure that the trace distance to $\sigma$ is at most:
\begin{align*}
\frac{1}{2}\|T^n(\rho)-\sigma\|_{\tr}^2\leq D(T^n(\rho)\|\sigma)\leq (1-\alpha)^n\log(\|\sigma^{-1}\|).
\end{align*}
Thus, if we apply the channel more than
\begin{align}\label{equ:mixingtime}
N_{\max}\geq \frac{\log(2\epsilon^2\log(\|\sigma^{-1}\|))}{\log((1-\alpha)^{-1})},
\end{align}
times then the trace distance between any input and output is at most $\epsilon$. 
This is to be compared with e.g. Eq.~\eqref{equ:maximumdepth} of the main text, which discussed the maximum distance after which we can approximate the energy of the output of a noisy circuit.  If we take e.g. $\sigma=I/2^n$ and we take $\alpha$ and $\epsilon$ to be constant, then we are $\epsilon$-close to $\sigma$ after $\log(n)$ depth. On the other hand, Eq.~\eqref{equ:maximumdepth} asserts that we can already approximate the energies of the output with a classically simulatable state after constant depth. This difference in the scaling clarifies that our analysis differs from estimating when the output of the circuit is essentially maximally mixed.

Although one might at first think that this is just a consequence of the method we used to bound the trace distance, it is possible to show that logarithmic time is really necessary before the output of the circuit becomes close to maximally mixed~\cite{PhysRevLett.103.080501,Kastoryano_2012}. The same holds when analysing how long it takes for the output to become separable~\cite{Lami_2016,Hanson_2020}, again highlighting how our analysis is qualitatively different.

\subsection{Variational formulation of the relative entropy}\label{sec:variational}

Another fundamental ingredient to concretely assess the potential of noisy quantum computers is the variational formulation of the relative entropy.
As proved by Petz~\cite{Petz_1988}, we have that:
\begin{align}\label{equ:variationalentropy}
&D(\rho\|\sigma)=\nonumber\\&\sup\limits_{\omega>0,\tr\lb\omega\rb=1}\tr\lb \rho\log(\omega)\rb-\log\lb \tr(\exp\lb \log(\omega)+\log(\sigma)\rb\rb.
\end{align}
From this we can immediately derive:
\begin{prop}[Lower bound of output energy of noisy device]\label{prop:variationalentropy}
Let $\Phi(\rho)$ be the output of a noisy quantum circuit and $H$ a Hamiltonian. For any $\sigma>0$ and $\beta>0$ define $\mathcal{Z}_{\beta,\sigma}=\tr\lb e^{-\beta H+\log(\sigma)}\rb$. Then:
\begin{align}
\tr\lb \Phi(\rho)H\rb\geq \sup\limits_{\beta>0}\beta^{-1}\lb \log(\mathcal{Z}_{\beta,\sigma})-D(\Phi(\rho)\|\sigma)\rb.\label{equ:variationallower}
\end{align}
\end{prop}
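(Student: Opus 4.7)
The plan is to derive the bound as a direct specialization of the Petz variational formula~\eqref{equ:variationalentropy}, which is already displayed right before the proposition. The variational formula gives a lower bound on $D(\Phi(\rho)\|\sigma)$ for every feasible trial state $\omega$, so by choosing $\omega$ cleverly we will obtain an inequality involving $\tr(\Phi(\rho)H)$ that can be rearranged into the desired lower bound.

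The natural guess is to take the Gibbs-like ansatz
\begin{align*}
\omega \;=\; \frac{e^{-\beta H}}{\tr\lb e^{-\beta H}\rb},
\end{align*}
so that $\log(\omega) = -\beta H - \log\tr(e^{-\beta H})\,I$ differs from $-\beta H$ only by a scalar multiple of the identity. The first term in the variational formula then becomes
\begin{align*}
\tr\lb \Phi(\rho)\log(\omega)\rb \;=\; -\beta\,\tr\lb \Phi(\rho) H\rb \;-\; \log\tr\lb e^{-\beta H}\rb.
\end{align*}
For the second term, because $(\log\tr(e^{-\beta H}))\,I$ commutes with everything, it factors out of the operator exponential, and we get
\begin{align*}
\log\tr\lb \exp(\log(\omega)+\log(\sigma))\rb \;=\; -\log\tr\lb e^{-\beta H}\rb \;+\; \log\mathcal{Z}_{\beta,\sigma}.
\end{align*}
Substituting both into~\eqref{equ:variationalentropy}, the two $\log\tr(e^{-\beta H})$ contributions cancel exactly, leaving
\begin{align*}
D(\Phi(\rho)\|\sigma) \;\geq\; -\beta\,\tr\lb \Phi(\rho) H\rb \;-\; \log\mathcal{Z}_{\beta,\sigma}.
\end{align*}
Dividing by $\beta>0$ and rearranging yields the bound for each fixed $\beta$; taking the supremum over $\beta>0$ finishes the proof. (The sign in front of $\log\mathcal{Z}_{\beta,\sigma}$ obtained this way matches Proposition~\ref{thm:lowerbound}, which specializes to $\sigma=I/2^n$ and thus to $-\log\mathcal{Z}_{\beta,\sigma}=n-\log\tr(e^{-\beta H})$ in the main text's convention.)

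There is really no hard step: once the trial state $\omega\propto e^{-\beta H}$ is written down, every quantity in Petz's formula is explicit and the normalization constant cancels. The only thing that needs care is the non-commutativity of $H$ and $\log(\sigma)$ when $\sigma$ is not maximally mixed; this is not an obstacle because the scalar shift in $\log(\omega)$ still factors out of the exponential, and the remaining trace is precisely the definition of $\mathcal{Z}_{\beta,\sigma}$. No further ingredients beyond Petz's identity and elementary functional calculus are needed.
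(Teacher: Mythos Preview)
Your proof is correct and follows exactly the same route as the paper: both plug the trial state $\omega\propto e^{-\beta H}$ into Petz's variational formula~\eqref{equ:variationalentropy}, observe that the normalization $\log\mathcal{Z}_\beta$ cancels, and rearrange. You have even correctly flagged the sign in the stated inequality by cross-checking against Proposition~\ref{thm:lowerbound}; the paper's own one-line proof arrives at $D(\Phi(\rho)\|\sigma)\geq -\beta\,\tr(\Phi(\rho)H)-\log(\mathcal{Z}_{\beta,\sigma})$, which matches your computation.
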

\begin{proof}
Picking $\omega=e^{-\beta H}/tr\lb e^{-\beta H}\rb$ in Eq.~\eqref{equ:variationalentropy} we obtain:
\begin{align*}
D(\Phi(\rho)\|\sigma)\geq -\beta\tr\lb \rho H\rb-\log(\mathcal{Z}_\beta)-\log(\mathcal{Z}_{\beta,\sigma}/\mathcal{Z}_\beta).
\end{align*}
The claim then follows from a straightforward manipulation of the terms above and noting that it holds for all $\beta>0$.
\end{proof}
As usual, it is instructive to take the case $\sigma=I/2^n$ into consideration to gain some intuition on the scaling of the bound above. It can then be rewritten as 
\begin{align*}
\tr\lb\Phi(\rho) H\rb\geq \beta^{-1}\lb n-\log(\mathcal{Z_\beta})-D(\Phi(\rho)\|I/2^n)\rb.
\end{align*}
Note that in the limit $\beta\to\infty$ we obtain the inequality $\tr\lb\rho H\rb\geq E_0$, where $E_0$ is the ground state energy. Moreover, note that
\begin{align*}
\lim\limits_{\beta\to0}\beta^{-1}\lb n-\log(\mathcal{Z_\beta})\rb=\tr\lb H\frac{I}{2^n}\rb.
\end{align*}
That is, for $\beta$ small, the first term is given by the energy of the maximally mixed state. Let us now once again resort to relative entropy convergence techniques.
We know that if $\Phi$ is a depth $D$ quantum circuit affected by depolarizing noise with probability $p$ we have:
\begin{align*}
 \tr\lb\rho H\rb\geq \beta^{-1}\lb n-\log(\mathcal{Z_\beta})-(1-p)^{2D}n\rb.
\end{align*}
Now assume that $(1-p)^{2D}\ll\beta$ for some small $\beta$. Then the second term will be under control for that $\beta$ and we have the promise that the first term is close to the energy of the maximally mixed state, ensuring that we do not deviate much from it. However, this inequality also provides nontrivial bounds even for moderate values of $(1-p)^{2D}$. Note that evaluating this bound requires us to estimate the partition function in some range of parameters $\beta$. However, as remarked in the main text, there are many efficient algorithms that achieve this for $\beta$ below a critical temperature. 

It is also worth noting that for local Hamiltonians one has
\begin{align}\label{equ:taylornoise}
\beta^{-1}\lb n-\log(\mathcal{Z_\beta})\rb=\tr\lb H\frac{I}{2^n}\rb-\cO(\beta n).
\end{align}
This implies that usually taking $D_{\max}=-\log(\epsilon^{-1})/\log(1-p)$ should be enough to ensure that the energy of the output of the circuit is $\cO(\epsilon n)$ away from the energy of the maximally mixed state.
Indeed, we have:
\begin{align*}
 &\tr\lb\rho H\rb\geq \epsilon^{-1}\lb n-\log(\mathcal{Z_\epsilon})-(1-p)^{2D_{\max}}n\rb\\
& =\tr\lb H\frac{I}{2^n}\rb-\cO(\beta n)-\epsilon^{-1}\epsilon^{2}n=\tr\lb H\frac{I}{2^n}\rb-\cO(\epsilon n).
\end{align*}

Although using Prop.~\ref{prop:variationalentropy} does not immediately give us classical simulability thresholds, as it still requires us to evaluate the partition function of the model, it is very useful to assess the potential of a quantum computer for a given problem. Indeed, suppose that a classical algorithm, not necessarily simulated annealing, provides us an energy $E_c$. By evaluating the partition function for some range of $\beta$ and estimating the relative entropy with a strong data processing inequality it then immediately gives us access to concrete lower bounds on the outputs of noisy quantum devices at a certain depth. In particular, we can estimate the maximal depth before the output of the device is larger than $E_c$ and the quantum computer becomes useless. We exemplify this in Section~\ref{sec:variationalexample}.

\subsection{Proof of Lemma~\ref{thm:mirrordescent} for maximally mixed states}\label{subsec:newLemma2}

However, in the case of $\sigma=I/2^n$, it is straightforward to obtain approximate classical simulability bounds from the variational formulation of the relative entropy:
\begin{cor}\label{cor:variationalmaxmixed}
Let $\rho$ be a $n$-qubit state such that $D(\rho\|I/2^n)\leq \beta\epsilon \|H\|$ for some $\epsilon,\beta>0$ and Hamiltonian $H$. Then we have:
\begin{align*}
    \tr\lb H\rho\rb\geq \tr \lb \sigma_\beta H\rb-\epsilon \|H\|.
\end{align*}
\end{cor}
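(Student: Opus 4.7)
The strategy is to derive the corollary directly from Proposition \ref{prop:variationalentropy}, specialised to $\sigma = I/2^n$, by choosing the free test operator to be the Gibbs state $\sigma_\beta$ itself (equivalently, by taking $\omega=\sigma_\beta$ in the Petz formula \eqref{equ:variationalentropy}). The key identity I would establish as a preliminary step is the Klein-type decomposition
\begin{equation*}
D(\rho\|\sigma_\beta) = \beta\bigl(\tr(H\rho) - \tr(H\sigma_\beta)\bigr) + D(\rho\|I/2^n) - D(\sigma_\beta\|I/2^n),
\end{equation*}
which follows by expanding both $D(\rho\|\sigma_\beta)$ and the vanishing quantity $D(\sigma_\beta\|\sigma_\beta)$ in terms of the von Neumann entropy and the expectation of $H$, and subtracting.

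From there the proof is a two-line manipulation. I would drop the non-negative quantities $D(\rho\|\sigma_\beta)\geq 0$ and $D(\sigma_\beta\|I/2^n)\geq 0$ in the above identity, rearrange, and divide by $\beta>0$ to obtain the clean a priori bound
\begin{equation*}
\tr(H\rho) \geq \tr(H\sigma_\beta) - \beta^{-1} D(\rho\|I/2^n),
\end{equation*}
which, when combined with the hypothesis $D(\rho\|I/2^n)\leq \beta\epsilon\|H\|$, immediately yields $\tr(H\rho) \geq \tr(H\sigma_\beta) - \epsilon\|H\|$.

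I do not expect any real obstacle here, since the statement is essentially an algebraic rewriting of Proposition \ref{prop:variationalentropy} together with non-negativity of the relative entropy; one could equivalently derive it by plugging $\beta'=\beta$ and the explicit value $\log\mathcal{Z}_{\beta,I/2^n} = \log\mathcal{Z}_\beta - n\log 2$ into \eqref{equ:variationallower}. The only bookkeeping subtlety is to stay consistent with the logarithm base implicit in Lemma \ref{thm:concentration} (where $D(\rho\|I/2^n)\leq n$ forces $\log = \log_2$); this merely rescales $\beta$ by a factor of $\ln 2$ and does not affect the content of the statement.
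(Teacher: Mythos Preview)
Your proof is correct. Both your argument and the paper's start from Proposition~\ref{prop:variationalentropy} specialised to $\sigma=I/2^n$, which gives $\tr(H\rho)\geq \beta^{-1}(n-\log\mathcal{Z}_\beta)-\beta^{-1}D(\rho\|I/2^n)$; the only difference lies in how the term $\beta^{-1}(n-\log\mathcal{Z}_\beta)$ is compared to $\tr(\sigma_\beta H)$. The paper applies the mean value theorem to $f(\beta)=\log\mathcal{Z}_\beta$, writing $\beta^{-1}(f(0)-f(\beta))=\tr(\sigma_{\tilde\beta}H)$ for some intermediate $\tilde\beta\in(0,\beta)$, and then invokes the monotonicity of the Gibbs energy in $\beta$ to conclude $\tr(\sigma_{\tilde\beta}H)\geq\tr(\sigma_\beta H)$. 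Your Klein-type decomposition instead yields the exact identity $\beta^{-1}(n-\log\mathcal{Z}_\beta)=\tr(\sigma_\beta H)+\beta^{-1}D(\sigma_\beta\|I/2^n)$ and simply drops the non-negative remainder. Your route is slightly more elementary---no differentiability or monotonicity argument is needed---and makes the slack term $\beta^{-1}D(\sigma_\beta\|I/2^n)$ explicit; the paper's route is morally equivalent (both are manifestations of the convexity of $\log\mathcal{Z}_\beta$) but passes through an auxiliary temperature $\tilde\beta$ that is not otherwise used.
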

\begin{proof}
Note that the log-partition function $f(\beta)=\log(\mathcal{Z}_\beta)$ is always differentiable for finite dimensional systems. Thus, by the mean value theorem we have that:
\begin{align*}
    \frac{f(0)-f(\beta)}{\beta}=-f'(\tilde{\beta})
\end{align*}
for some $\tilde{\beta}\in(0,\beta)$. It is well-known that $-f'(\tilde{\beta})=\tr \lb \sigma_{\tilde{\beta}} H\rb$. As the energy is monotone decreasing in $\beta$, we have $\tr \lb \sigma_{\tilde{\beta}} H\rb\geq\tr \lb \sigma_{\beta} H\rb $. Thus, it follows from Prop.~\ref{prop:variationalentropy} that
\begin{align*}
    \tr\lb H\rho\rb\geq\tr \lb \sigma_{\tilde{\beta}} H\rb-\beta^{-1}D(\rho\|I/2^n)\geq \tr \lb \sigma_{\tilde{\beta}} H\rb-\epsilon \|H\|,
\end{align*}
where in the last line we used our assumed bound on $D(\rho\|I/2^n)$.
\end{proof}
We will later generalize the statement above to arbitrary stationary states in Lemma~\ref{thm:mirrordescent2}.
However, the corollary above improves the estimate of Lemma~\ref{thm:mirrordescent2} by a factor of $4$. 

\subsubsection{Choice of error parameter $\epsilon$}\label{subsubsec:epsilon}

Let us now discuss the scaling and the choice of the parameter $\epsilon$ in Lemma~\ref{thm:mirrordescent} and similar statements throughout the text. It controls the precision with which we approximate the energy, as in Eq.~\eqref{equ:goodenergy}. Throughout this article the reader should think of $\epsilon$ in the range $10^{-1}-10^{-2}$. This is justified for several reasons. In the setting of the variational quantum eigensolver~\cite{moll2017quantum, kandala2017hardware, 
wang2019accelerated}, note that most algorithms have an inverse polynomial scaling in $\epsilon$ in the sample complexity for each iteration~\cite{harrow2019lowdepth}. In most of the cases, the dependency is quadratic, while other proposals get a better dependence by proposing circuits of larger depth or more complex measurements~\cite{wang2019accelerated,harrow2019lowdepth}. Thus, for most of these proposals, a precision beyond $\epsilon\sim10^{-2}$ would require a significant number of measurements for each iteration or larger circuit depths. In the adiabatic setting, the more general theorems~\cite{Jansen_2007} have a scaling of the runtime that has an $\epsilon^{-1}$ dependency on the error $\epsilon$ as defined in Eq.~\eqref{equ:goodenergy}.
Thus, obtaining high precision solutions may take very long evolution times, although some adiabatic theorems have a logarithmic dependency on the error under stronger assumptions~\cite{RevModPhys.90.015002,Ge_2016}. 
More importantly, through the variational bound of Eq.~\ref{equ:variationalentropy} we numerically observe that picking $\epsilon$ of order $10^{-2}-10^{-1}$ is actually sufficient to ensure that the energy of $\tilde{\sigma}$ is strictly smaller than that of $\Phi(\rho)$.
This behaviour is justified by eq.~\eqref{equ:taylornoise}.

\subsection{Worked through example of the variational bound}\label{sec:variationalexample}
Let us now exemplify how to apply the lower bound given in Lemma~\ref{prop:variationalentropy}. We do this with MAXCUT instances on $3$-regular graphs. We will compare the performance of the device with the SDP relaxation and Gibbs sampling with inverse temperatures in the polynomial time range.
See Fig.~\ref{fig:lowervariational} for an example.
\begin{figure}[h!]
\centering
\includegraphics[width=1\columnwidth]{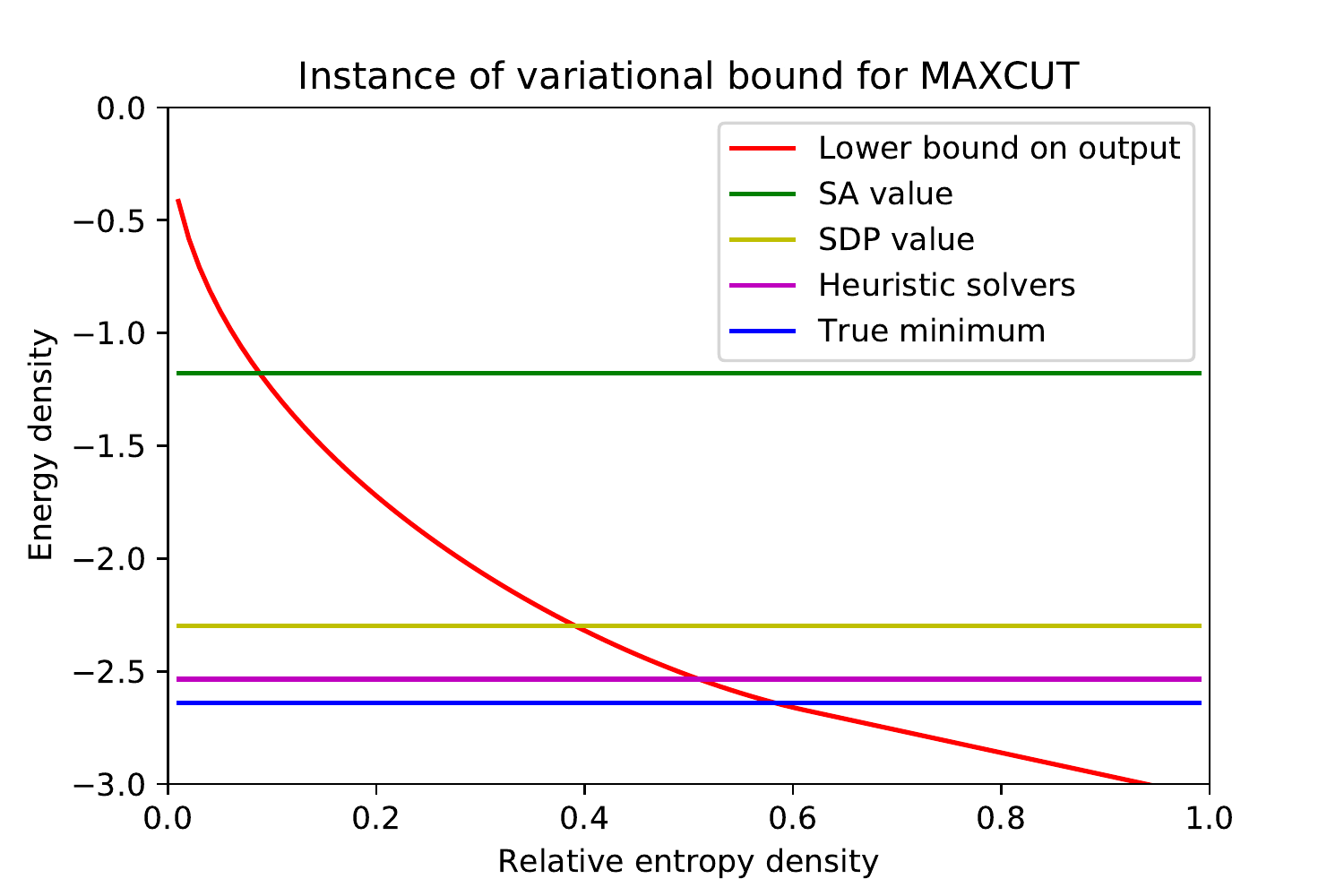}
\caption{Performance of Lemma~\ref{prop:variationalentropy} in one instance of MAXCUT of a $3$-regular graph with $20$ nodes. The lower bound on curve gives the maximum energy density predicted by Eq.~\eqref{equ:variationallower} when evaluated on $\beta\in(0,\beta_c)$ and different values of the relative entropy density. The SA value line gives the energy achieved by simulated annealing in the polynomial time range, while the SDP line gives the expected value of the SDP relaxation randomized rounding. Furthermore, the heuristic solver line gives the bound assuming that heuristic solvers achieve a 0.95 approximation ratio. From this graph we see that as long as the depth is such that the relative entropy density of the output states is at most $\simeq 0.4$, then SDP methods will yield better approximations than the noisy circuit. On the other hand, SA will outperform the noisy quantum computer at densities $\simeq 0.1$. }
\label{fig:lowervariational}
\end{figure}
As soon as one of the horizontal lines corresponding to a method crosses the lower bound line and the lower bound is then above it, it means that for relative entropy densities smaller than that threshold the noisy quantum computer is guaranteed to be outperformed by said method. 
From Fig.~\ref{fig:lowervariational}, we infer that the SDP threshold is around $0.4n$.  Thus, we get the condition $D(\Phi(\rho)\|\sigma)\leq 0.4n$ for $\sigma=I/2^n$ for SDPs to outperform the noisy quantum computer. Making the approximation $\log(0.4)\simeq -0.92$ and resorting to the usual bound $D(\Phi(\rho)\|\sigma)\leq (1-p_1)^{2Df_1}(1-p_2)^{2Df_2}$, where $f_1$ is the fraction of $1-$qubit layers with depolarizing noise $p_1$ and $f_2$, $p_2$ defined analogously for $2$ qubits, we conclude that the maximum depth satisfies
\begin{align*}
    D\leq 0.46(f_1p_1+f_2p_2)^{-1}.
\end{align*}
Note that this estimate is superior to the naive $1/p$ estimate of the maximal depth one can achieve.
Assuming the fractions of layers to be the same and the noise parameters as in Sec.~\ref{sec:numberofcycles}, we get a total depth of roughly $D\simeq 70$ of when the noisy quantum computer is guaranteed to be outperformed by the SDP relaxation. We note that we observe a threshold relative entropy density for varying graph sizes and random instances of up to $20$ qubits of around $0.35-0.4$. Thus, we will extrapolate the findings to larger system sizes. 
Assuming the slightly more conservative relative entropy density bound threshold of $0.3$ scales to larger system sizes and that each round of QAOA requires a depth scaling like $n$, we see that we expect the SDP to outperform three rounds of QAOA already at systems of size $30$, orders of magnitude below what current solvers can handle.

\section{Correlation length and required circuit depth}\label{sec:corlength}

In Eq.~\eqref{equ:maximumdepth} we gave a bound on the maximal circuit depth before a noisy quantum device is outperformed by classical methods that only depended on the geometry of the lattice and noise rate. We will now show that this also implies that noisy quantum devices performing VQE are unlikely to find the ground states with long correlation length. First, recall that given a graph $G=(V,E)$, a state is said to have correlation length $\xi$ if for all observables $O_x$ and $O_y$ supported on sites $x,y\in V$ we have that
\begin{align}\label{equ:corrlengh}
&|\tr\lb\ketbra{\psi}{\psi} O_x\otimes O_y\rb-\tr\lb\ketbra{\psi}{\psi} O_x\rb\tr\lb\ketbra{\psi}{\psi}O_y\rb|\\&\leq 2\|O_x\|\|O_y\|e^{-\frac{d(x,y)}{\xi}}.
\end{align}
Let us assume a negation of Eq~\eqref{equ:corrlengh}, i.e. the existence of observables on $O_x,O_y$ at $x,y$ at distance $\xi$ s.t. 
\begin{align}\label{equ:corrlengh}
&|\tr\lb\ketbra{\psi}{\psi} O_x\otimes O_y\rb-\tr\lb\ketbra{\psi}{\psi} O_x\rb\tr\lb\ketbra{\psi}{\psi}O_y\rb|\\&\geq \|O_x\|\|O_y\|C,
\end{align}
where $C\gg 0$ is some constant. That is, there are significant correlations in the state at distance $d(x,y)$. Suppose further for simplicity that the graph $G$  also reflects the connectivity of the noisy quantum device and let $U=U_1U_2\cdots U_D$ be a circuit of depth $D$.
Let us recall the well-known result that, starting from a product state, a depth proportional to the correlation length is required to prepare a state:
\begin{lem}
Let $\ket{\psi}$ be the ground state of $\kappa$-local Hamiltonian $H$ on a $d$-dimensional lattice such that there exist observables $O_x,O_y$ with $\|O_x\|,O_y\|=1$ supported at sites $x,y$ at distance $\xi$ fulfilling 
\begin{align}\label{equ:corrlengh2}
&\tr\lb\ketbra{\psi}{\psi} O_x\otimes O_y\rb-\tr\lb\ketbra{\psi}{\psi} O_x\rb\tr\lb\ketbra{\psi}{\psi}O_y\rb\nonumber\\&\geq C.
\end{align}
Then the output $\Phi\lb \ketbra{0}{0}^{\otimes n}\rb$ of any depth $\frac{\xi}{2}$ circuit $\Phi$ with the same locality as the lattice satisfies:
\begin{align*}
\|\Phi(\ketbra{0}{0}^{\otimes n})-\ketbra{\psi}{\psi}\|_{tr}\geq \frac{C}{4}
\end{align*}
\end{lem}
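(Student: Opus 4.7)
The plan is to combine a Heisenberg-picture light-cone argument with the product structure of the initial state to show that the connected $O_x$--$O_y$ correlator on the noisy output vanishes, then translate this gap between the two states into a trace-distance lower bound via H\"older duality.

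First I would observe that since every layer of $\Phi$ consists of gates supported on edges of the lattice, in the Heisenberg picture $\Phi^\dagger(O_x)$ is supported in the ball $B_D(x)=\{z:d(z,x)\le D\}$ of radius $D$ around $x$, and similarly $\Phi^\dagger(O_y)$ in $B_D(y)$. With $D=\xi/2$ and $d(x,y)=\xi$ these two balls are disjoint. I would then split the gates of $\Phi$ into three groups according to whether their causal future contains $x$, contains $y$, or neither. Disjointness of the light cones guarantees that the first two groups are disjoint, while the third group acts trivially on $O_x\otimes O_y$. This yields the factorisation
\begin{align*}
\Phi^\dagger(O_x\otimes O_y)=\Phi_x^\dagger(O_x)\otimes\Phi_y^\dagger(O_y)\otimes I_R,
\end{align*}
where $\Phi_x,\Phi_y$ are the channels obtained by collecting the gates in the respective cones and $R$ is the complementary region. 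Since $\ket{0}^{\otimes n}$ is a product state across these three regions, the expectation on $\rho:=\Phi(\ketbra{0}{0}^{\otimes n})$ factorises as $\tr(O_x\otimes O_y\,\rho)=\tr(O_x\rho)\tr(O_y\rho)$, so its connected correlator is exactly zero.

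By assumption the connected correlator on $\sigma=\ketbra{\psi}{\psi}$ is at least $C$, so the two connected correlators differ by at least $C$. Using $\|O_x\|,\|O_y\|\le 1$ and H\"older's inequality, each of $\tr(O_xO_y(\sigma-\rho))$, $\tr(O_x(\sigma-\rho))$, $\tr(O_y(\sigma-\rho))$ is bounded by a constant multiple of $\|\sigma-\rho\|_{tr}$. Expanding $|ab-a'b'|\le|a||b-b'|+|b'||a-a'|$ and applying the triangle inequality to the difference of connected correlators then yields a bound of the form $C\le c\|\sigma-\rho\|_{tr}$ for a small universal constant $c$, which rearranges to the stated inequality.

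The main technical hurdle is the factorisation step, since $\Phi$ may be a genuinely noisy CPTP circuit rather than a unitary, and it is not a priori obvious that its adjoint splits across disjoint regions even when the light cones do. The cleanest way to handle this is to pass to a Stinespring dilation of each noisy layer, run the unitary version of the light-cone argument on the enlarged system -- where the tensor factorisation across disjoint supports is immediate -- and then trace out the ancillae at the end. A secondary issue is tightening the constant in the final step: reaching the stated $C/4$ likely requires centring the observables (e.g.\ replacing $O_x$ by $O_x-\tr(O_x\sigma)I$) before invoking H\"older, rather than the naive decomposition above.
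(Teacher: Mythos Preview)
Your approach is correct and essentially the same as the paper's: both use the light-cone argument to show that the $O_x$--$O_y$ correlator factorises on the circuit output, then convert the resulting gap into a trace-distance bound via the duality $|\tr(O(\sigma-\rho))|\le\|O\|\,\|\sigma-\rho\|_{tr}$. The paper organises the final step slightly differently, by contradiction: assuming $\|\Phi(\ketbra{0}{0}^{\otimes n})-\ketbra{\psi}{\psi}\|_{tr}<C/4$ gives $|a_x-b_x|,|a_y-b_y|<C/4$ directly, and then the single witness $O=O_x\otimes O_y$ yields $|b_{x,y}-a_xa_y|\ge C-|b_xb_y-a_xa_y|\ge C/2$, a contradiction. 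Your direct route also works without any refinement: carrying out your own decomposition with $\|O_x\|,\|O_y\|\le1$ gives $C\le\|\sigma-\rho\|_{tr}+2\|\sigma-\rho\|_{tr}=3\|\sigma-\rho\|_{tr}$, i.e.\ $\|\sigma-\rho\|_{tr}\ge C/3\ge C/4$, so no centring of observables is needed. Likewise, the Stinespring detour is unnecessary: the paper simply passes to the Heisenberg picture $\Phi^*$, and since each noisy layer is a tensor product of local CP unital maps, one has $T^*(A\otimes B)=T^*_1(A)\otimes T^*_2(B)$ whenever the supports are disjoint, so the factorisation $\Phi^*(O_x\otimes O_y)=\Phi^*(O_x)\otimes\Phi^*(O_y)$ follows inductively exactly as in the unitary case.
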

\begin{proof}
We will prove the claim by contradiction. Assume that 
\begin{align}\label{equ:statesareclose}
\|\Phi(\ketbra{0}{0}^{\otimes n})-\ketbra{\psi}{\psi}\|_{tr}< \frac{C}{4}
\end{align}
By the variational formulation of the trace distance, it suffices to find an observable $O$ with $\|O\|\leq 1$ and such that
\begin{align*}
\tr\lb O\lb \Phi(\ketbra{0}{0}^{\otimes n})-\ketbra{\psi}{\psi}\rb\rb\geq\frac{C}{4}
\end{align*}
for a contradiction.
Consider
\begin{align*}
O=O_x\otimes O_y
\end{align*}
Note that as the circuit has the same locality as the lattice and sites $x$ and $y$ are a distance $\xi$ apart, we have that $\Phi^*(O_x)$ and $\Phi^*(O_y)$ have disjoint support. Here $\Phi^*$ corresponds to the circuit in the Heisenberg picture. This is because the circuit has depth at most $\tfrac{\xi}{2}$. 
Let $a_x=\tr\lb O_x\Phi(\ketbra{0}{0}^{\otimes n})\rb$ and $a_y=\tr\lb O_y\Phi(\ketbra{0}{0}^{\otimes n})\rb$.
Thus, 
\begin{align*}
\tr(O\Phi(\ketbra{0}{0}^{\otimes n}))= a_xa_y
\end{align*}
Let $b_{x,y}=\tr\lb O_x\otimes O_y\ketbra{\psi}{\psi}\rb$, $b_x=\tr\lb O_x\ketbra{\psi}{\psi}\rb$ and $b_y=\tr\lb O_y\ketbra{\psi}{\psi}\rb$. By Eq.~\eqref{equ:corrlengh2}, we have that:
\begin{align*}
b_{x,y}\geq C+b_xb_y.
\end{align*}
Moreover, as we assumed Eq.~\eqref{equ:statesareclose}, we have that
\begin{align*}
|b_x-a_x|<\frac{C}{4},\quad |b_y-a_y|<\frac{C}{4}.
\end{align*}
Thus:
\begin{align*}
&\tr\lb O\lb \Phi(\ketbra{0}{0}^{\otimes n})-\ketbra{\psi}{\psi}\rb\rb\geq C+b_xb_y-a_xa_y\\
&\geq C-(|b_x|+|b_y|)\frac{C}{4}\geq \frac{C}{2}.
\end{align*}
This yields the claim.
\end{proof}

Thus, we see that for depths $D\leq \xi/2$, the quantum circuit is guaranteed to be a constant distance away from the ground state and conclude that a depth which scales at least linearly with the correlation length of the state is necessary to faithfully reproduce the ground state. 
This, however, does not exclude the possibility that a quantum device can outperform classical optimization methods at smaller depths if there are excited states with significantly smaller correlation lengths and small energies. But then, if we are interested in properties of the ground state, it is arguable that the quantum computer is outperforming classical methods for the wrong reason. This is because the output of the circuit will necessarily be far away from the ground state. And, at larger depths, unless the noise rate is sufficiently small, the device is outperformed by classical methods in optimizing the energy. This is captured in the next proposition:
\begin{prop}\label{prop:noisycorrel}
Let $\ket{\psi}$ be the ground state of a $\kappa$ local Hamiltonian $H$ on a $d$-dimensional lattice  such that there exist observables $O_x,O_y$ at distance $\xi$ fulfilling 
\begin{align}\label{equ:corrlengh}
&|\tr\lb\ketbra{\psi}{\psi} O_x\otimes O_y\rb-\tr\lb\ketbra{\psi}{\psi} O_x\rb\tr\lb\ketbra{\psi}{\psi}O_y\rb|\nonumber\\&\geq \|O_x\|\|O_y\|C.
\end{align}
Suppose that $\Phi(\ketbra{0}{0}^{\otimes n})$ is the output of a noisy quantum circuit with depth $D$ respecting the locality of  the lattice. Furthermore, suppose that the circuit undergoes local depolarizing noise with rate $p$ that satisfies:
\begin{align*}
p\geq\frac{2\log(20e d^\kappa\epsilon^{-1})}{\xi}.
\end{align*}
Then $\Phi(\ketbra{0}{0}^{\otimes n})$ either satisfies:
\begin{align}\label{equ:disstancelarge}
\|\Phi(\ketbra{0}{0}^{\otimes n})-\ketbra{\psi}{\psi}\|_{tr}\geq \frac{C}{4}
\end{align}
or there is a quasi-polynomial time classical algorithm that outputs a state $\tilde{\sigma}$ that satisfies:
\begin{align}
    \tr\lb H\lb \tilde{\sigma}-\Phi(\rho)\rb\rb\leq \|H\|\epsilon.
\end{align}
\end{prop}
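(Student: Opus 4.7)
The plan is to prove the dichotomy by splitting on the depth $D$ of the circuit, using the two ingredients already prepared in the paper: the light-cone/correlation-length lemma proved immediately before the proposition, and Proposition~\ref{prop:depth_classical} which gives a classically simulable Gibbs state once $D$ exceeds $D_{\max}=\log(20e(d^\kappa\epsilon)^{-1})/(2p)$. The hypothesis on $p$ is exactly the one that makes these two regimes overlap, so every $D$ falls into (at least) one of the cases.

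First, I would unpack the hypothesis $p\geq 2\log(20e d^\kappa\epsilon^{-1})/\xi$. Rearranging gives $\xi/2 \geq \log(20e d^\kappa\epsilon^{-1})/p = 2D_{\max}$, hence in particular $\xi/2 \geq D_{\max}$. This is the quantitative bridge between the two regimes and is the only place the noise-rate assumption enters.

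Next, I would split on cases. If $D \leq \xi/2$, then the circuit respects the locality of the lattice and has depth at most $\xi/2$, so the preceding lemma applies directly and yields $\|\Phi(\ketbra{0}{0}^{\otimes n})-\ketbra{\psi}{\psi}\|_{\tr} \geq C/4$, which is Eq.~\eqref{equ:disstancelarge}. If instead $D > \xi/2$, then by the observation above $D > \xi/2 \geq D_{\max}$, so Proposition~\ref{prop:depth_classical} applies to $\Phi(\ketbra{0}{0}^{\otimes n})$ (undergoing local depolarizing noise with rate $p$) and produces a Gibbs state $\tilde\sigma$ with $\tr(H(\tilde\sigma-\Phi(\rho))) \leq 2\kappa Jn\epsilon$, whose energy can be computed in quasi-polynomial time. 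Up to absorbing the constant $2\kappa J$ into the normalization of $\|H\|$ (recall $\kappa Jn \leq \|H\|$ from the setup preceding Proposition~\ref{prop:depth_classical}), this gives the claimed bound $\tr(H(\tilde\sigma-\Phi(\rho))) \leq \|H\|\epsilon$.

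There is no real obstacle: both lemmas are already in hand and the proof is a one-line case split glued together by the arithmetic inequality $\xi/2 \geq D_{\max}$. The only mild subtlety is being careful that the locality assumption on $\Phi$ needed by the light-cone lemma is inherited from the architecture matching the lattice geometry (so that depolarizing layers, which are on-site, do not change the causal structure), and that the normalization conventions for $\|H\|$ are consistent between the two results so that the error $2\kappa Jn\epsilon$ in Proposition~\ref{prop:depth_classical} translates cleanly into $\|H\|\epsilon$ in the conclusion; if one wants a strictly $\|H\|\epsilon$ bound one can simply rescale $\epsilon$ by a constant factor, absorbed into the logarithm in the hypothesis on $p$.
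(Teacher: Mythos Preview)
Your proof is correct and follows exactly the paper's approach: a case split on whether $D<\xi/2$ (invoke the light-cone lemma) or $D\geq\xi/2$ (invoke Proposition~\ref{prop:depth_classical} via the noise-rate hypothesis). One minor arithmetic slip: the equality $\log(20e d^\kappa\epsilon^{-1})/p = 2D_{\max}$ is not literally right since $D_{\max}$ in Proposition~\ref{prop:depth_classical} involves $(d^\kappa\epsilon)^{-1}=d^{-\kappa}\epsilon^{-1}$ rather than $d^{\kappa}\epsilon^{-1}$, but the inequality $\xi/2\geq D_{\max}$ you need still holds (indeed with room to spare), so the argument goes through unchanged.
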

\begin{proof}
If the depth of the circuit $\Phi$ is less than $\frac{\xi}{2}$, then Eq.~\eqref{equ:disstancelarge} holds. On the other hand, for depths $D\geq \frac{\xi}{2}$, we have a quasi-polynomial algorithm by Proposition~\ref{prop:depth_classical} and the bound on the noise rate.
\end{proof}

Unfortunately, it is a hard task to show that for a given Hamiltonian there are no low-energy states that have a small correlation length and, thus, we cannot exclude the possibility of there being minima of the energy at low depths. However, for some models, one can show that a quantum circuit of at least logarithmic depth is required to beat polynomial time classical algorithms.
For instance, it has been shown that the quantum approximate optimization algorithm requires at least logarithmic depth to outperform classical polynomial time approximation algorithms to optimize the energy of certain classical Ising Hamiltonians~\cite{2005.08747,1910.08980}. This implies that even noise rates that decay with the number of qubits are not enough to ensure that QAOA outperforms classical methods:
\begin{prop}[Inverse logarithmic noise QAOA for classical Ising models never outperforms polynomial time classical for certain instances]
Suppose a quantum device with $n$ qubits suffers from $1$-local depolarizing noise after each layer of QAOA is applied with probability
\begin{align*}
p\geq \frac{\log\lb \epsilon^{-1}\rb\log(\Delta-1)}{2\log(n)}
\end{align*}
for some $\Delta\geq3$ and $\epsilon>0$. Then there exist $\Delta$-regular graphs $G=(V,E)$ on $n$ vertices such that for the Hamiltonian
\begin{align*}
H=-\sum\limits_{i\sim j}Z_iZ_j
\end{align*}
performing QAOA on the noisy device never outperforms polynomial time randomized classical algorithms by more than $\epsilon \Delta n$.
\end{prop}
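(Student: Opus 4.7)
The plan is to split on the depth $D$ of the QAOA circuit into a shallow regime and a deep regime, each handled by a distinct polynomial-time classical algorithm.

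In the shallow regime $D\leq D_* := c\log(n)/\log(\Delta-1)$ for a constant $c$ to be fixed below, the plan is to invoke the existing QAOA lower bounds of~\cite{2005.08747,1910.08980}: these results construct $\Delta$-regular graphs on $n$ vertices (typically large-girth instances, so that shallow QAOA only sees the local tree) on which no noiseless depth-$D$ QAOA beats a specific polynomial-time randomized classical algorithm by more than $\epsilon\Delta n$ in energy.  Since $1$-local depolarizing noise acting after each layer can only mix the output state closer to the maximally mixed state (and, a fortiori, only bring its energy closer to $\tr(HI/2^n)$), the noisy QAOA output is at most as good as its noiseless counterpart, so the same classical algorithm continues to dominate in this regime.

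In the deep regime $D\geq D_*$, the plan is to apply Proposition~\ref{prop:ising} directly.  For $H=-\sum_{i\sim j}Z_iZ_j$ on a $\Delta$-regular graph we have $\|A\|\leq\Delta$ and $\|H\|\leq\Delta n$, so the logarithmic correction $\log(\|H\|^{-1}\|A\|n)$ in~\eqref{equ:boundD} is $O(1)$.  With $1$-local depolarizing noise of rate $p$ after every QAOA layer and no measurement or two-qubit noise, the bound of Proposition~\ref{prop:ising} simplifies to
\begin{align*}
D_{\max}\;\approx\;\frac{\log(\epsilon^{-1})}{2p}.
\end{align*}
Substituting the hypothesis $p\geq \frac{\log(\epsilon^{-1})\log(\Delta-1)}{2\log n}$ gives $D_{\max}\leq \log(n)/\log(\Delta-1)$, which by choosing $c$ large enough can be made no larger than $D_*$.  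Hence for $D\geq D_*\geq D_{\max}$, Proposition~\ref{prop:ising} yields a classical Gibbs state $\tilde{\sigma}$ at inverse temperature $\beta<\|A\|^{-1}$, sampleable in polynomial time via the mixing condition~\eqref{equ:conditionfastmixing1}, whose energy is within $\epsilon\|H\|=O(\epsilon\Delta n)$ of $\tr(H\Phi(\rho))$.  Combining the two regimes produces, on the instances of~\cite{2005.08747,1910.08980}, a polynomial-time classical algorithm whose output is within $O(\epsilon\Delta n)$ of the noisy QAOA energy at every depth $D$.

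The main obstacle is aligning the two constants: the constant $c$ governing the shallow-depth classical-versus-QAOA lower bound of~\cite{2005.08747,1910.08980}, and the constant hidden in the $\log(\epsilon^{-1})/(2p)$ estimate of Proposition~\ref{prop:ising}, must be chosen so that the shallow regime meets the deep regime without a gap.  This alignment is what pins down the prefactor $1/2$ in the hypothesis on $p$.  A secondary and more routine task is matching the additive tolerance $\epsilon\Delta n$ to the precise normalization in which the cited QAOA lower bound is stated; this is bookkeeping rather than new mathematical content.
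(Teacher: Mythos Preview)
Your plan mirrors the paper's proof: split at $D_{\min}=\log(n)/\log(\Delta-1)$, use the QAOA obstruction results of~\cite{1910.08980,2005.08747} below $D_{\min}$, and apply the entropy-contraction/Gibbs-sampling bound (i.e.\ Proposition~\ref{prop:ising}) at and above $D_{\min}$.

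Two small points of divergence are worth flagging.  First, in the shallow regime the paper does not argue that depolarizing noise ``can only bring the energy closer to $\tr(HI/2^n)$''; that monotonicity is not obvious once the noise is interleaved with the two-qubit $e^{i\gamma H}$ layers (with which it does not commute).  Instead the paper simply compares approximation ratios: the cited works show that depth-$<D_{\min}$ QAOA attains ratio at most $1/2$ on the constructed instances, while Goemans--Williamson attains $>0.87$, so the classical algorithm wins outright.  The implicit extension to noisy QAOA relies on the locality/symmetry structure underlying those obstruction results, not on any noise-monotonicity claim.  Second, the constant $c$ in your threshold $D_*$ is not a parameter you are free to enlarge: the cited obstructions pin $D_{\min}$ at $c=1$, and the deep-regime computation must then verify $D_{\max}\leq D_{\min}$ directly from the hypothesis on $p$---which is precisely the calculation that fixes the prefactor $1/2$, as you correctly anticipate in your closing paragraph.
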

\begin{proof}
In~\cite{1910.08980,2005.08747} the authors identify instances such that at least
\begin{align}\label{equ:roundsQAOA}
    D_{\min}=\log(n)/\log(\Delta-1)
\end{align}
applications of the QAOA unitary operator are required to achieve an approximation ratio that is better than $1/2$. Thus, for depths smaller than $D_{\min}$, QAOA is outperformed by the Goemanson-Williamson algorithm~\cite{Goemans_1995}, as it achieves an approximation ratio greater than $0.87$. 
As we assumed local depolarizing noise, if $\Phi$ is a noisy depth $D$ quantum channel we have:
\begin{align*}
D\lb\Phi(\rho)\|\frac{I}{2^n}\rb\leq  (1-p)^{2D}n.
\end{align*}
Thus, to obtain an approximation up to $\epsilon n\Delta$, we need to sample from the inverse temperature
\begin{align*}
\beta=\frac{(1-p)^{2D}}{\Delta\epsilon}.
\end{align*}
The results of~\cite{2007.08200} assert that we may sample efficiently from this inverse temperature as long as
\begin{align*}
\beta\leq\frac{1}{\Delta}.
\end{align*}
Thus, if $p$ is such that for $D_{\min}$ we have 
\begin{align*}
\lb\frac{(1-p)^{2D_{\min}}}{\epsilon}\rb\leq 1,
\end{align*}
then also for depths larger than $D_{\min}$ the noisy quantum device will not substantially outperform classical methods.
Solving for $p$ we then obtain the constraint:
\begin{align*}
\log(1-p)\leq \frac{\log(\Delta-1)}{2\log(n)}\log\lb\epsilon^{-1}\rb 
\end{align*}
Using $\log(1-p)\geq-p$ yields the claim.
\end{proof}

Thus, we see that without error correction, for some instances QAOA is outperformed by classical methods even with noise rates decaying logarithmically with system size. To the best of our knowledge, this is the first result that asserts that a noisy quantum algorithm will be outperformed by polynomial time-classical algorithms even with local noise rates converging to $0$. However, as pointed out in~\cite{2005.08747}, the logarithmic dependency in system size on the depth of the circuit is not too restrictive in practice and this bound would admittedly only be relevant for very large system sizes. However, as remarked in the main text, if the depth of each QAOA layer scales with system size due to the limited connectivity of the device, then the bound above can actually be used to obtain rigorous results on how noisy QAOA will never substantially outperform polynomial time algorithms for some instances. Indeed, evaluating Eq.~\eqref{equ:roundsQAOA} for $n=10^3$ qubits and $\Delta=3$ gives that we need roughly $10$ rounds of QAOA to outperform SDPs. Note that for the estimates in Sec.~\ref{sec:numberofcycles} we only assumed three rounds of QAOA.
Thus, performing $10$ rounds of QAOA with each QAOA unitary requiring circuit depth scaling mildly with $n$ would require extremely low noise rates in order for results like Prop.~\ref{prop:ising} not to guarantee classical superiority. 

Although we cannot prove the lack of advantage at any depth for noisy quantum algorithms beyond the example above, a striking phenomenon was observed in~\cite{BravoPrieto2020scalingof}. There, the authors show numerical evidence that for critical systems, which have long correlation lengths, noiseless VQE and QAOA algorithms do not make significant progress in lowering the energy before depths proportional to the system size. This suggests that an even stronger variation of Prop.~\eqref{prop:noisycorrel} may be true for such systems. That is, the noisy quantum computer will not output lower energy states for any depth unless the noise rate scales at least inverse linearly with system size.

On the other hand, for gapped systems, which have exponentially decaying correlations~\cite{Hastings2006}, the authors of~\cite{BravoPrieto2020scalingof} show that noiseless VQE algorithms already start exponentially converging to the ground state for small depths. 

Thus, if the behaviour observed in~\cite{BravoPrieto2020scalingof} for the preparation of states with long correlation length is indeed universal for VQE algorithms, it may be difficult to demonstrate any advantage in optimizing the energy for states with long correlation lengths. And, if there indeed is one, then it may also happen because the VQE algorithm is stuck in a local minimum, as discussed in Prop.~\ref{prop:noisycorrel}. 

We conclude that our results indicate that it is more realistic to expect noisy quantum computers to prepare states with a small correlation length. However, these are exactly the class of states classical simulation techniques such as tensor networks~\cite{Or_s_2014} excel at. Moreover, as argued in~\cite{kim2017robust,kim2017noiseresilient,1909.04786}, for such states it is more noise robust to prepare them in a sequential fashion instead of through a quantum circuit. Thus, when it comes to optimizing the ground state of local Hamiltonians on noisy devices, it seems advisable to focus on Hamiltonians that are expected to be gapped, but still out of reach for current classical methods. Furthermore, it may be advisable to focus on sequential preparation instead of a circuit ansatz.

\section{General Circuits}\label{sec:GeneralCircuits}
In this section we show how to generalize Lemma~\ref{thm:mirrordescent} beyond the maximally mixed case.
Furthermore, we develop some tools that lead to the qubit circuit equivalent of Lemma~\ref{thm:concentration} in the main manuscript 
and that is a preliminary step to derive a variation of entropic convergence for the quantum annealers beyond the case of a maximally mixed fixed point. This is the content of Lemma~\ref{lem:dataprocessedtriangle}.

\subsection{Mirror descent}\label{sec:mirrordescent}

Now that we discussed some basic properties of the relative entropy, let us discuss Lemma~\ref{thm:mirrordescent} in more detail and prove an analogue beyond the maximally mixed state.
As mentioned in the main text, the goal of our framework is to find a state that approximates the energy of the  noisy quantum device. To that end, given a quantum state $\rho\in\mathcal{D}_{2^n}$, an $n$-qubit Hamiltonian $H$ and an error parameter $\epsilon>0$, define the sets:
\begin{align*}
\cC(\rho,H,\epsilon)=\{\sigma\in \mathcal{D}_{2^n}:\tr\left(H\left(\sigma-\rho\right)\right)\leq \epsilon \|H\|\}.
\end{align*}
That is, $\sigma\in\cC(\rho,H,\epsilon)$ if its energy is at most $\epsilon \|H\|$ larger than that of $\rho$. We will then say that $\sigma$ performs approximately as well as $\Phi(\rho)$ to optimize $H$, where $\Phi$ is as usual a noisy quantum device with input $\rho$, if $\sigma\in\cC(\Phi(\rho),H,\epsilon)$.
The goal of our framework will be to find a Gibbs state $\tilde{\sigma}$ that is in $\cC(\Phi(\rho),H,\epsilon)$ for a given device and noise levels.
We will now prove the convergence of the update rule of mirror descent for completeness:
\begin{lem} \label{lem:update-rule}
Let $\rho$ be a quantum state.
Fix a Hamiltonian $H_0$ and let $\sigma_0= e^{-H_0}/\mathcal{Z}_0$. 
Suppose that for some other Hamiltonian $H$ we have:
\begin{align*}
\tr\lb H\lb \sigma_0-\rho\rb\rb\geq \|H\|\epsilon.
\end{align*}
Set $H_{1}=H_0+\frac{\epsilon}{2\|H\|}H$. Then, the Gibbs states $\sigma_1= e^{-H_1}/\mathcal{Z}_1$ obeys
\begin{equation*}
D(\rho \| \sigma_{1}) - D(\rho \|\sigma_0) \leq- \frac{\epsilon^2}{4}
\end{equation*}
\end{lem}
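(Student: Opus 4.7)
The plan is to compute the difference $D(\rho\|\sigma_1)-D(\rho\|\sigma_0)$ explicitly and then bound the only nontrivial piece, namely the ratio $\mathcal{Z}_1/\mathcal{Z}_0$ of partition functions, using Golden--Thompson together with a second-order Taylor bound.

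First, using $\log\sigma_i=-H_i-\log\mathcal{Z}_i$ and the definition of the relative entropy, the $\rho$-dependent terms cancel and we get
\begin{align*}
D(\rho\|\sigma_1)-D(\rho\|\sigma_0)
=\tr(\rho(H_1-H_0))+\log(\mathcal{Z}_1/\mathcal{Z}_0)
=\tfrac{\epsilon}{2\|H\|}\tr(\rho H)+\log(\mathcal{Z}_1/\mathcal{Z}_0).
\end{align*}
So the entire argument reduces to a good upper bound on $\log(\mathcal{Z}_1/\mathcal{Z}_0)$ that extracts $-\tr(\sigma_0 H)$ with the correct sign, so that the assumption $\tr(H(\sigma_0-\rho))\geq\|H\|\epsilon$ can be invoked.

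Next, apply Golden--Thompson to $e^{-H_0-tH}$ with $t=\epsilon/(2\|H\|)$:
\begin{align*}
\mathcal{Z}_1=\tr(e^{-H_0-tH})\leq\tr(e^{-H_0}e^{-tH})=\mathcal{Z}_0\,\tr(\sigma_0 e^{-tH}).
\end{align*}
Since $\|tH\|=\epsilon/2\leq 1/2$ (we take $\epsilon\leq 1$, which is the regime of interest and is implicit elsewhere in the paper), the operator inequality $e^{-y}\preceq 1-y+y^2$ holds on the spectrum of $tH$, giving
\begin{align*}
\tr(\sigma_0 e^{-tH})\leq 1-t\tr(\sigma_0 H)+t^2\tr(\sigma_0 H^2)\leq 1-t\tr(\sigma_0 H)+t^2\|H\|^2.
\end{align*}
Applying $\log(1+x)\leq x$ yields $\log(\mathcal{Z}_1/\mathcal{Z}_0)\leq -t\tr(\sigma_0 H)+t^2\|H\|^2$.

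Finally, substituting back and using the hypothesis,
\begin{align*}
D(\rho\|\sigma_1)-D(\rho\|\sigma_0)
\leq -t\,\tr(H(\sigma_0-\rho))+t^2\|H\|^2
\leq -t\|H\|\epsilon+t^2\|H\|^2
=-\tfrac{\epsilon^2}{2}+\tfrac{\epsilon^2}{4}=-\tfrac{\epsilon^2}{4},
\end{align*}
which is the desired bound. The one delicate point is the scalar inequality $e^{-y}\leq 1-y+y^2$ on $|y|\leq 1/2$, which is exactly what forces the step size $t=\epsilon/(2\|H\|)$ and the prefactor $\tfrac{1}{4}$ rather than a larger constant; everything else is bookkeeping with Golden--Thompson and $\log(1+x)\leq x$.
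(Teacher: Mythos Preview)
Your proof is correct and self-contained, but it takes a different route from the paper's.

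Both arguments start identically, reducing the claim to an upper bound on $\log(\mathcal{Z}_1/\mathcal{Z}_0)$. The paper then applies the Peierls--Bogoliubov inequality to obtain $\log(\mathcal{Z}_1/\mathcal{Z}_0)\leq -t\,\tr(H\sigma_1)$, which involves the \emph{new} Gibbs state $\sigma_1$; to convert this into something comparable with $\tr(H\sigma_0)$ they add and subtract, bound the cross term by $\tfrac12\|\sigma_1-\sigma_0\|_{\tr}$ via Helstrom, and finally invoke an external lemma (\cite[Lem.~16]{Brandao2017a}) giving $\tfrac12\|\sigma_1-\sigma_0\|_{\tr}\leq e^{\epsilon/4}-1\leq \epsilon/2$. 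Your approach instead uses Golden--Thompson to land directly on $\tr(\sigma_0 e^{-tH})$, with $\sigma_0$ already present, and then a second-order Taylor bound on the scalar function $e^{-y}$ produces the quadratic error term without ever comparing $\sigma_0$ and $\sigma_1$. The net effect is the same $-\epsilon^2/4$, and both routes implicitly need $\epsilon$ of order one. Your version is arguably more elementary since it avoids the cited trace-distance lemma; the paper's version has the minor conceptual advantage that the first-order bound from Peierls--Bogoliubov is sharp (with error controlled separately), whereas Golden--Thompson already loses something before the Taylor step.
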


\begin{proof}
We have:
\begin{align*}
&D(\rho \| \sigma_{1}) - D (\rho \| \sigma_0) 
=\\& \mathrm{tr} \left( \rho (H_{1}-H_0 \right)
+ \log \left( \frac{\mathrm{tr}(\exp (-H_{1}))}{\mathrm{tr}(\exp (- H_0))} \right)
\end{align*}
By construction, $H_1-H_{0}=\frac{\epsilon}{2\|H\|}H$ and the first term equals $\frac{\epsilon}{2\|H\|} \mathrm{tr}(H \rho)$. 
The logarithmic ratio can be bounded by $\frac{\epsilon}{2\|H\|}\mathrm{tr}(H \sigma_1)$ and we refer to \cite{fern2019faster}[Proof of Lemma~3.1] for a derivation based on the Peierls-Bogoliubov inequality. 
Adding and subtracting $\frac{\epsilon}{2\|H\|}\mathrm{tr}(H \sigma_0)$, we conclude
\begin{align*}
&D(\rho \| \sigma_{1}) - D (\rho \| \sigma_0) 
 \\
\leq & \frac{\epsilon}{2}
\left( \mathrm{tr}\lb\frac{H}{\|H\|}(\sigma_{1}-\sigma_0)\rb + \mathrm{tr}\lb\frac{H}{\|H\|}(\rho-\sigma_0)\rb \right) \\
\leq & \frac{\epsilon}{2} \left( \tfrac{1}{2}\| \sigma_{1}-\sigma_0 \|_{\mathrm{tr}}+\mathrm{tr}\lb \frac{H}{\|H\|}(\rho-\sigma_0\rb\right),
\end{align*}
Here we used Hellstrom's bound to get
\begin{align*}
\tfrac{1}{2}\| \sigma_{1}-\sigma_0 \|_{\mathrm{tr}}\geq  \mathrm{tr}\lb\frac{H}{\|H\|}(\sigma_{1}-\sigma_0)\rb.
\end{align*}
In terms of 
The claim now follows from noticing that $\sigma_0$ and $\sigma_1$ must be close in trace distance. \cite[Lem.~16]{Brandao2017a} implies 
$\tfrac{1}{2}\| \sigma_{1} - \sigma_0 \|_{tr} \leq \exp (\frac{\epsilon}{4}) -1 \leq  \frac{\epsilon}{2}$ .
\end{proof}
That is, as long as we have not converged to a state in $\tilde{\sigma}\in\cC(\rho,H,\epsilon)$, updating the Hamiltonian of the Gibbs state gives rise to a state that is closer to the target state in relative entropy.
Let us now see how this can be used to prove Lemma~\ref{thm:mirrordescent}, which we now restate here for the reader's convenience. Before stating the Lemma, let us remark that the statement below is worse by a factor of $1/4$  for the required relative entropy compared to that of Lemma~\ref{thm:mirrordescent}. That is, here we required the Gibbs state with Hamiltonian $\frac{4\lambda}{\|H\|\epsilon}H$ instead of $\frac{\lambda}{\|H\|\epsilon}H$. This is because in the case of $\sigma$ being maximally mixed state it is straightforward to obtain this slightly improved statement from the formulation of the relative entropy, as we show later in Cor.~\ref{cor:variationalmaxmixed}.
\begin{lem}\label{thm:mirrordescent2}
Let $\Phi(\rho)$ be the output of a noisy quantum device and for some quantum state $\sigma>0$ let $H_\infty=\log(\sigma)$ and 
$D(\Phi(\rho)||\sigma)=\tr\lb \Phi(\rho)\lb \log(\Phi(\rho))-\log(\sigma)\rb\rb$ be their relative entropy.
Then for any Hamiltonian $H$ there is a $\lambda\in [0,D(\Phi(\rho)||\sigma)]$ such that the state $\tilde{\sigma}$ defined as
\begin{align}\label{equ:thermalstateclassical}
    \tilde{\sigma}=\text{exp}\lb -H_\infty-\frac{4\lambda}{\|H\|\epsilon}H\rb/\mathcal{Z}
\end{align}
satisfies:
\begin{align}\label{equ:goodenergy}
    \tr\lb H\lb \tilde{\sigma}-\Phi(\rho)\rb\rb\leq \|H\|\epsilon.
\end{align}
That is, $\tilde{\sigma}\in\cC(\rho,H,\epsilon)$.
\end{lem}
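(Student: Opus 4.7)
My plan is to prove Lemma~\ref{thm:mirrordescent2} by iterating the one-step mirror-descent update of Lemma~\ref{lem:update-rule}. I initialize $\sigma_0 = \sigma$, choosing $H_0$ so that $\sigma_0 = e^{-H_0}/\mathcal{Z}_0$ (i.e. $H_0$ is $H_\infty$ up to an additive constant absorbed by the partition function). Then I run the following loop: at step $k$, if $\sigma_k\in\cC(\Phi(\rho),H,\epsilon)$, terminate with $\tilde{\sigma}=\sigma_k$; otherwise $\tr(H(\sigma_k-\Phi(\rho)))>\|H\|\epsilon$, so Lemma~\ref{lem:update-rule} applies with $\rho=\Phi(\rho)$ and yields $\sigma_{k+1}$ whose Hamiltonian is $H_{k+1}=H_k+\tfrac{\epsilon}{2\|H\|}H$ and whose relative entropy to $\Phi(\rho)$ has strictly decreased: $D(\Phi(\rho)\|\sigma_{k+1})\leq D(\Phi(\rho)\|\sigma_k)-\epsilon^2/4$.

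Since $D(\Phi(\rho)\|\sigma_k)\geq 0$ at every step and each iteration drains the potential by at least $\epsilon^2/4$, the loop must terminate after at most
\begin{align*}
N\leq \frac{4\,D(\Phi(\rho)\|\sigma)}{\epsilon^2}
\end{align*}
iterations, at which point $\sigma_N$ automatically satisfies Eq.~\eqref{equ:goodenergy}. Unrolling the recursion gives $H_N=H_0+N\tfrac{\epsilon}{2\|H\|}H$, so setting $\lambda:=N\epsilon^2/8$ identifies the coefficient of $H$ as exactly $\tfrac{4\lambda}{\|H\|\epsilon}$ and places $\tilde{\sigma}=\sigma_N$ in the form asserted by the Lemma. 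The upper bound on $N$ then yields $\lambda\leq D(\Phi(\rho)\|\sigma)/2$, so in particular $\lambda\in[0,D(\Phi(\rho)\|\sigma)]$.

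There is no deep obstacle here: all the analytic work has been absorbed into Lemma~\ref{lem:update-rule}, and what remains is a Lyapunov-style bookkeeping argument with the relative entropy playing the role of a potential function drained at a constant rate per iteration. The only small subtleties are (i) ensuring $D(\Phi(\rho)\|\sigma)<\infty$, which is guaranteed by the hypothesis $\sigma>0$, and (ii) keeping the factors of $2$ and $4$ straight when translating from the iteration count $N$ to the parameter $\lambda$ appearing in the statement --- the combination of the $\tfrac{\epsilon}{2\|H\|}$ per-step shift with the $\tfrac{4\lambda}{\|H\|\epsilon}$ normalization chosen in the Lemma forces the relation $\lambda=N\epsilon^2/8$, which in turn produces the factor $\tfrac{1}{2}$ slack between the derived bound $\lambda\leq D(\Phi(\rho)\|\sigma)/2$ and the interval $[0,D(\Phi(\rho)\|\sigma)]$ claimed in the statement.
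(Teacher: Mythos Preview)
Your proposal is correct and follows essentially the same approach as the paper: both iterate the one-step update of Lemma~\ref{lem:update-rule}, using the relative entropy $D(\Phi(\rho)\|\sigma_k)$ as a potential that decreases by at least $\epsilon^2/4$ per step until the energy condition is met. The only cosmetic difference is that the paper phrases the argument as a proof by contradiction (assume no $\lambda$ works, iterate past the point where the relative entropy would become negative), whereas you phrase it as a direct termination argument; the content is identical, and your bookkeeping of the factors (yielding $\lambda\leq D(\Phi(\rho)\|\sigma)/2$) is in fact slightly cleaner than the paper's.
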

\begin{proof}
We will prove this by contradiction. Suppose that for no $\lambda\in[0,D(\Phi(\rho)||\sigma)]$ we have that $\tilde{\sigma}$ satisfies Eq.~\eqref{equ:goodenergy}. Define the sequence of states
\begin{align*}
\sigma_t=\text{exp}\lb -H_\infty-\frac{t\epsilon}{4\|H\|}H\rb/\mathcal{Z}_t
\end{align*}
for $t\geq 0$. Note that by our assumption, we have for all $t\in [0,4\epsilon^{-2}D(\Phi(\rho)||\sigma)]$ that:
\begin{align*}
   \tr\lb H\lb \sigma_t-\Phi(\rho)\rb\rb> \|H\|\epsilon.
\end{align*}
Thus, it follows from~\ref{lem:update-rule} that:
\begin{align*}
D(\Phi(\rho) \| \sigma_{t+1}) - D(\Phi(\rho) \|\sigma_{t}) \leq- \frac{\epsilon^2}{4}.
\end{align*}
We conclude that 
\begin{align}\label{equ:approxrelativeentropy}
D(\Phi(\rho)|| \sigma_t)\leq- t\frac{\epsilon^2}{4} +D(\Phi(\rho)||\sigma_0).
\end{align}
If we pick $t=\lceil 4\epsilon^{-2}D(\Phi(\rho)||\sigma)\rceil$, we get that
\begin{align*}
D(\Phi(\rho)|| \sigma_t)<0,
\end{align*}
which contradicts the positivity of the relative entropy.
\end{proof}
Note that the Lemma above only ensures that the state $\tilde{\sigma}$ reproduces approximately the same energy as the output of the noisy circuit. It does not promise us that the state $\tilde{\sigma}$ approximates the output $\Phi(\rho)$  for other observables.
This is because as soon as $\tilde{\sigma}\in \cC(\Phi(\rho),H,\epsilon)$, we do not have any information on how to update the Hamiltonian of $\tilde{\sigma}$ solely based on the energy and further expectation values are required to ensure  convergence.

However, a close inspection of the proof above shows that if the value  of $t$ we need to observe comparable energies to $\Phi(\rho)$ is high, then we are ensured that $\tilde{\sigma}$ is a good approximation to $\Phi(\rho)$. Indeed, we see from Eq.~\eqref{equ:approxrelativeentropy} that:
\begin{align*}
&\frac{1}{2}\|\Phi(\rho)-\sigma_t\|_{\tr}^2\leq D(\Phi(\rho)|| \sigma_t)\\&\leq- t\frac{\epsilon^2}{4} +D(\Phi(\rho)||\sigma_0).
\end{align*}
Thus, if the $\lambda$ required to observe similar energies to $\Phi(\rho)$ is of order $D(\Phi(\rho)||\sigma_0)$, then we are also guaranteed a good global approximation.

\subsection{Entropic convergence in quantum circuits}\label{sec:entrodiscrete}
The goal of this subsection is to obtain the entropic convergence required for our framework.
Let us start by bounding the convergence of the entropy in discrete time, i.e. in quantum circuits. The main technical result we need for this is the data-processed inequality of~\cite[Theorem III.1]{Christandl_2017}. It asserts that for a quantum channel $P$ and states $\rho,\sigma$ and $\sigma'$ we have:
\begin{align*}
D(P(\rho)\|\sigma)\leq D(\rho\|\sigma')+D_{\infty}(P(\sigma')\|\sigma).
\end{align*}
We then have:
\begin{lem}\label{lem:dataprocessedtriangle}
Let $T:\M_{2^n}\to\M_{2^n}$ be a quantum channel with fixed point $\sigma$ that satisfies a strong data-processing inequality with constant $\alpha>0$. That is,
\begin{align*}
D(T(\rho)||\sigma)\leq (1-\alpha)D(\rho||\sigma)
\end{align*}
for all states $\rho$. Then for any other quantum channels $\Phi_1,\ldots,\Phi_m:\M_{2^n}\to\M_{2^n}$ we have:
\begin{align}\label{equ:dataproccphi}
&D(\bigcirc_{t=1}^m (\
\Phi_t\circ T)(\rho)||\sigma)\leq (1-\alpha)^m D(\rho||\sigma)+\nonumber\\&\sum_{t=0}^{m-1} (1-\alpha)^{m-t} D_{\infty}( \Phi_{t}(\sigma)||\sigma),
\end{align}
where 
\begin{align*}
D_{\infty}(\rho||\sigma)=\log(\|\sigma^{-\frac{1}{2}}\rho\sigma^{-\frac{1}{2}}\|).
\end{align*}
\end{lem}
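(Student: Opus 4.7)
The plan is to prove the lemma by a simple induction on $m$, using the data-processed triangle inequality of Christandl--Müller-Hermes as the key tool. The two ingredients are perfectly complementary: the strong DPI for $T$ contracts the relative entropy by $(1-\alpha)$ at every step, while the triangle inequality handles the non-contractive perturbations $\Phi_t$ by charging them a correction proportional to $D_{\infty}(\Phi_t(\sigma)\|\sigma)$. The critical observation that makes everything work cleanly is that $\sigma$ is the fixed point of $T$, so $T(\sigma)=\sigma$, and hence we can route the triangle inequality through $\sigma'=\sigma$ without paying any $D_\infty$ cost for the $T$-layer.

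The first step is to establish the one-step recursion. For any state $\eta$, apply the data-processed triangle inequality of~\cite{Christandl_2017} to the channel $\Phi_t$, with input state $T(\eta)$ and intermediate reference state $\sigma'=\sigma$:
\begin{align*}
D\bigl(\Phi_t(T(\eta))\,\big\|\,\sigma\bigr) \leq D\bigl(T(\eta)\,\big\|\,\sigma\bigr) + D_{\infty}\bigl(\Phi_t(\sigma)\,\big\|\,\sigma\bigr).
\end{align*}
The strong DPI of $T$ then bounds the first term by $(1-\alpha)\,D(\eta\|\sigma)$, yielding the one-step inequality
\begin{align*}
D\bigl((\Phi_t\circ T)(\eta)\,\big\|\,\sigma\bigr) \leq (1-\alpha)\,D(\eta\|\sigma) + D_{\infty}\bigl(\Phi_t(\sigma)\,\big\|\,\sigma\bigr).
\end{align*}

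The second step is to iterate. Define $\rho_0=\rho$ and $\rho_t=(\Phi_t\circ T)(\rho_{t-1})$, so that $\rho_m=\bigcirc_{t=1}^m(\Phi_t\circ T)(\rho)$. Applying the one-step recursion to $\eta=\rho_{t-1}$ and unrolling gives
\begin{align*}
D(\rho_m\|\sigma) \leq (1-\alpha)^m\,D(\rho\|\sigma) + \sum_{t=1}^{m}(1-\alpha)^{m-t}\,D_{\infty}\bigl(\Phi_t(\sigma)\,\big\|\,\sigma\bigr),
\end{align*}
which is~\eqref{equ:dataproccphi} (up to a harmless shift of the summation index). A clean induction on $m$ makes this rigorous: the inductive step is precisely the one-step recursion applied to $\rho_{m-1}$, combined with the inductive hypothesis for $D(\rho_{m-1}\|\sigma)$.

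The main conceptual point — and really the only nontrivial choice in the argument — is the selection of $\sigma'=\sigma$ in the triangle inequality. A different choice of $\sigma'$ would either introduce a non-vanishing $D_\infty$ cost for the $T$-step (destroying the contraction) or prevent the strong DPI from being applicable with $\sigma$ as the reference state. No further difficulty arises; the rest is just bookkeeping of the geometric factors $(1-\alpha)^{m-t}$.
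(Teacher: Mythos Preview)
The proposal is correct and follows essentially the same approach as the paper: both apply the data-processed triangle inequality of Christandl--M\"uller-Hermes with the choice $\sigma'=\sigma$ to handle each $\Phi_t$, then invoke the strong DPI for $T$ to pick up the $(1-\alpha)$ contraction factor, and finish by induction on $m$. Your remark about the harmless index shift is apt, since the paper's statement itself has a mismatch between the channels $\Phi_1,\ldots,\Phi_m$ and the sum over $\Phi_0,\ldots,\Phi_{m-1}$.
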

\begin{proof}
We will prove this by mathematical induction. For $m=1$, this follows from the  data-processed triangle inequality of~\cite[Theorem III.1]{Christandl_2017}. Following their notation with $P= \
\Phi$ and $\sigma'=\sigma$ we have:
\begin{align}\label{equ:resultn=1}
D( \
\Phi_1\circ T)(\rho)||\sigma)\leq D(T(\rho)||\sigma)+ D_{\infty}(\Phi_1(\sigma)||\sigma).
\end{align}
Let us now assume the  claim to be true for $m=k$ and let us show it for $m=k+1$. We have:
\begin{align}\label{equ:inductionhypok}
&D((\bigcirc_{t=1}^{k+1} (\
\Phi_t\circ T)(\rho)||\sigma)\leq (1-\alpha)^k D( \
\Phi_{k+1}\circ T)(\rho)||\sigma)+\nonumber\\&\sum_{t=0}^{k-1} (1-\alpha)^{m-t} D_{\infty}( \Phi_{t}(\sigma)||\sigma)
\end{align}
by our induction hypothesis. 
Applying Eq.~\eqref{equ:resultn=1} to the first term in Eq.~\eqref{equ:inductionhypok} and applying the strong data-processing inequality we obtain the claim.
\end{proof}
The lemma above yields a good estimate on the decay of the relative entropy in two (not mutually exclusive) scenarios. First, when the state $\sigma=e^{-\gamma H_1}/\tr(e^{-\gamma H_1})$ is a high temperature Gibbs state. Indeed, in this case we have, as mentioned before, that $D_{\infty}(\Phi(\sigma)||\sigma)\leq 2\gamma \|H_1\|$. Thus, the second term in~\eqref{equ:dataproccphi} will be small irrespective of the unitaries being implemented if $\gamma\ll \|H_1\|^{-1}$.
In particular, in the doubly stochastic case, i.e. $\sigma=\frac{I}{2^n}$, we have $D_{\infty}(\Phi(\sigma)||\sigma)=0$. 
The second scenario is the one in which the quantum circuit approximately preserves the fixed point at late times of the evolution, i.e. $\Phi_t(\sigma)\simeq \sigma$ for all $t\geq m_0$. In this case, the $(1-\alpha)^{m-t}$ prefactor makes sure that the contribution of $ D_{\infty}( \Phi_{t}(\sigma)||\sigma)$ is small for $t\leq m_0$ and $D_{\infty}( \Phi_{t}(\sigma)||\sigma)$ is small by our assumption that $\Phi_t(\sigma)\simeq \sigma$ for $t>m_0$. For instance, this could be the case for a circuit consisting overwhelmingly of diagonal gates at the end of the computation under decoherence, that is, $\sigma$ a diagonal state. A concrete example of such a circuit would be the quantum Fourier transform.

\section{Quantum Annealers}\label{sec:SMquann}

\subsection{Entropic convergence in quantum annealers}
The statement of Lemma~\ref{lem:dataprocessedtriangle} allows us to analyse the entropic convergence of circuits in discrete time. Let us now generalize these results to quantum annealers under uniform noise.
The first step will be to generalize Lemma~\ref{lem:dataprocessedtriangle} for a fixed Lindbladian and a fixed Hamiltonian:
\begin{lem}\label{lem:fixedhamiltonian}
Let $\cL:\M_{2^n}\to\M_{2^n}$ be a Lindbladian with fixed point $\sigma$ satisfying a MLSI with constant $\alpha>0$. Moreover, let $\cH:\M_{2^n}\to\M_{2^n}$ be given by $\cH(X)=i[H,X]$ for some Hamiltonian $H$. Then for all states $\rho$ and times $t>0$:
\begin{align*}
&D(e^{t(\cL+\cH)}(\rho)||\sigma)\leq e^{-\alpha t}D(\rho||\sigma)\\&+\alpha^{-1}(1-e^{-\alpha t})\|\sigma^{-\frac{1}{2}}[H,\sigma]\sigma^{-\frac{1}{2}}\|.
\end{align*}
\end{lem}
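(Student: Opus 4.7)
The plan is to realise the continuous-time semigroup as a limit of discrete compositions and invoke Lemma~\ref{lem:dataprocessedtriangle}. By the Lie--Trotter product formula (valid in finite dimensions), $e^{t(\cL+\cH)}(\rho)=\lim_{n\to\infty}\bigl(e^{(t/n)\cH}\circ e^{(t/n)\cL}\bigr)^{n}(\rho)$, with uniform convergence on the compact state space. Setting $T:=e^{(t/n)\cL}$, the MLSI~\eqref{equ:MLSI} integrates to the contraction bound $D(T(\rho)\|\sigma)\leq e^{-\alpha t/n}D(\rho\|\sigma)$, so $T$ fits the hypothesis of Lemma~\ref{lem:dataprocessedtriangle} with discrete contraction factor $e^{-\alpha t/n}$. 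Taking all $\Phi_{k}$ equal to the unitary channel $\Phi:=e^{(t/n)\cH}$, Lemma~\ref{lem:dataprocessedtriangle} yields
\begin{align*}
D\bigl((\Phi\circ T)^{n}(\rho)\,\|\,\sigma\bigr)\leq e^{-\alpha t}D(\rho\|\sigma)+\sum_{k=1}^{n}e^{-\alpha tk/n}\,D_{\infty}\bigl(\Phi(\sigma)\,\|\,\sigma\bigr).
\end{align*}

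Next, I would control the per-step cost $D_{\infty}(e^{s\cH}(\sigma)\|\sigma)$ for small $s>0$. Writing $M_{s}:=\sigma^{-\frac{1}{2}}e^{isH}\sigma e^{-isH}\sigma^{-\frac{1}{2}}$, one has $D_{\infty}(e^{s\cH}(\sigma)\|\sigma)=\log\|M_{s}\|$. A Taylor expansion about $s=0$ gives $M_{s}=I+sB+s^{2}R_{s}$, where $B:=i\sigma^{-\frac{1}{2}}[H,\sigma]\sigma^{-\frac{1}{2}}$ is Hermitian (because $[H,\sigma]$ is anti-Hermitian) and $\|R_{s}\|$ is uniformly bounded on $[0,t]$. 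Submultiplicativity of the operator norm then yields $\|M_{s}\|\leq 1+s\|B\|+O(s^{2})$, and hence $D_{\infty}(e^{s\cH}(\sigma)\|\sigma)\leq s\|\sigma^{-\frac{1}{2}}[H,\sigma]\sigma^{-\frac{1}{2}}\|+O(s^{2})$, uniformly for $s\in[0,t]$. Substituting $s=t/n$, the leading contribution to the sum becomes the right-endpoint Riemann sum
\begin{align*}
\frac{t}{n}\,\|\sigma^{-\frac{1}{2}}[H,\sigma]\sigma^{-\frac{1}{2}}\|\sum_{k=1}^{n}e^{-\alpha tk/n}\,\longrightarrow\,\|\sigma^{-\frac{1}{2}}[H,\sigma]\sigma^{-\frac{1}{2}}\|\int_{0}^{t}e^{-\alpha\tau}\,d\tau
\end{align*}
as $n\to\infty$, which equals $\alpha^{-1}(1-e^{-\alpha t})\|\sigma^{-\frac{1}{2}}[H,\sigma]\sigma^{-\frac{1}{2}}\|$, while the residual $n\cdot O((t/n)^{2})=O(t^{2}/n)$ terms disappear in the limit.

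The last step is to transfer the discrete bound to the continuous-time semigroup. Since $\sigma>0$, the functional $\omega\mapsto D(\omega\|\sigma)$ is continuous on the (finite-dimensional) set of states, so taking $n\to\infty$ on both sides of the inequality above yields the claim. I expect the main obstacle to be the careful bookkeeping of this limit: one has to verify that the $O(s^{2})$ remainder in the $D_{\infty}$ expansion is uniform enough that its $n$-fold sum vanishes, and that the Trotter iterates converge strongly enough in trace distance to invoke (lower semi-)continuity of the relative entropy. A pure differential approach (computing $\frac{d}{dt}D(\rho_{t}\|\sigma)$ and applying Gronwall) also works, but it naturally produces $\|[H,\log\sigma]\|$ instead of the sharper $\|\sigma^{-\frac{1}{2}}[H,\sigma]\sigma^{-\frac{1}{2}}\|$, so the Trotter route is needed to get the commutator in the form stated.
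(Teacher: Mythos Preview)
Your proposal is correct and follows essentially the same route as the paper: Trotterize $e^{t(\cL+\cH)}$, apply Lemma~\ref{lem:dataprocessedtriangle} with the per-step contraction $e^{-\alpha t/n}$, Taylor-expand $D_\infty(e^{(t/n)\cH}(\sigma)\|\sigma)$ to first order in $t/n$ to extract $\|\sigma^{-1/2}[H,\sigma]\sigma^{-1/2}\|$, and pass to the limit. The only cosmetic difference is that the paper sums the geometric series $\sum_{k=1}^{n}e^{-\alpha tk/n}$ explicitly and then multiplies by the leading term of $D_\infty$, whereas you recognise the product directly as a Riemann sum for $\int_0^t e^{-\alpha\tau}\,d\tau$; the two computations are equivalent.
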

\begin{proof}
Denote by $T_m=e^{\frac{t}{m}\cL}$ and $\Phi_m= e^{  \frac{t}{m}\cH}$ for some $m\in \N$. By the Lie-
Trotter formula we have:
\begin{align*}
\lim_{m\to\infty} (\Phi_m T_m)^m=e^{t(\cL+\cH)}.
\end{align*}
By Lemma~\ref{lem:dataprocessedtriangle} we have for all $m$:
\begin{align*}
&D((\Phi_m\circ T_m)^m(\rho)||\sigma)\leq e^{-t\alpha} D(\rho||\sigma)+\\& \sum\limits_{t=1}^m e^{-\frac{\alpha  t}{m}}D_{\infty}(\Phi_m(\sigma)||\sigma),
\end{align*}
as  each $T_m$ contracts the relative entropy by $e^{-\frac{\alpha t}{m}}$. 
First, we can sum the geometric series and see that
\begin{align*}
 \sum\limits_{t=1}^m e^{-\frac{\alpha  t}{m}}=\frac{(1-(e^{-t\alpha}))}{1-e^{-\alpha  t/m}}
\end{align*}

Let us now bound the limit:
\begin{align*}
\lim_{m\to \infty}\frac{(1-(e^{-t\alpha}))}{1-e^{-\alpha  t/m}} D_{\infty}(\Phi_m(\sigma)||\sigma).
\end{align*}
By a Taylor expansion:
\begin{align*}
\frac{(1-(e^{-t\alpha/2}))}{1-e^{-\alpha  t/m}}=\frac{1-(e^{-t\alpha}))}{\alpha t +\cO(m^{-1})}m.
\end{align*}
Thus, let us bound the limit 
\begin{align*}
\lim\limits_{m\to\infty}mD_{\infty}( \Phi_m(\sigma)||\sigma).
\end{align*}
As $\log(1+x)\leq x$, we have:
\begin{align*}
    mD_{\infty}( \Phi_m(\sigma)||\sigma)\leq m\lb \|\sigma^{-\frac{1}{2}}e^{\frac{t}{m} \cH}(\sigma)\sigma^{-\frac{1}{2}}\|-1\rb.
\end{align*}
Performing a Taylor expansion of $e^{\frac{t}{m} \cH}$ followed by a triangle inequality we see that
\begin{align*}
   \|\sigma^{-\frac{1}{2}}e^{\frac{t}{m} \cH}(\sigma)\sigma^{-\frac{1}{2}}\|\leq 1+\frac{t}{m}\|\sigma^{-\frac{1}{2}}[H,\sigma]\sigma^{-\frac{1}{2}}\|+\cO\lb \frac{t^2}{m^2}\rb.
\end{align*}
We conclude that
\begin{align*}
&m\lb \|\sigma^{-\frac{1}{2}}e^{\frac{t}{m} \cH}(\sigma)\sigma^{-\frac{1}{2}}\|-1\rb\\&
\leq t\|\sigma^{-\frac{1}{2}}[H,\sigma]\sigma^{-\frac{1}{2}}\|+\cO\lb \frac{t^2}{m}\rb,
\end{align*}
which gives:
\begin{align*}
    &\lim_{m\to \infty}\frac{(1-(e^{-t\alpha}))}{1-e^{-\alpha  t/m}} D_{\infty}( \Phi_m(\sigma)||\sigma)\leq\\&
    \frac{(1-e^{-t\alpha})}{\alpha}\|\sigma^{-\frac{1}{2}}[H,\sigma]\sigma^{-\frac{1}{2}}\|
\end{align*}
\end{proof}
This allows us to control the scaling of the relative entropy with a fixed Hamiltonian term and gives the following simple corollary:
\begin{cor}\label{cor:manytimesteps}
Let $\cS_1,\cS_2,\ldots,\cS_m$ be Lindbladians of the form $\cS_l=\cL+\cH_l$. Here $\cL$ is a fixed Lindbladian with fixed point $\sigma>0$ satisfying a MLSI with constant $\alpha>0$. Then for all states $\rho$ and times $t_1,\ldots,t_m$:
\begin{align}
&D(\bigcirc_{l=1}^m e^{t_l \cS_{t_l} }(\rho)||\sigma)\leq e^{-\alpha T}D(
\rho||\sigma)\nonumber\\&+\alpha^{-1}\sum\limits_{l=1}^me^{-\alpha T_l}(1-e^{-\alpha t_l})\|\sigma^{-\frac{1}{2}}[H_l,\sigma]\sigma^{-\frac{1}{2}}\|
\end{align}
with $T_l=\sum\limits_{j=1}^{l} t_l$.
\end{cor}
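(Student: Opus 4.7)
The plan is to proceed by induction on the number of steps $m$, using Lemma~\ref{lem:fixedhamiltonian} both as the base case and as the one-step contraction at each iteration of the inductive argument. The structure of the hypothesized bound is exactly what one obtains by telescoping: at each step one pays an additive error proportional to $\|\sigma^{-1/2}[H_l,\sigma]\sigma^{-1/2}\|$, and this error is then contracted by the MLSI of $\cL$ during all subsequent evolutions.

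More concretely, let $\rho_l = \bigcirc_{j=1}^l e^{t_j \cS_{t_j}}(\rho)$, so $\rho_l = e^{t_l \cS_{t_l}}(\rho_{l-1})$ with $\rho_0 = \rho$. The base case $m=1$ is Lemma~\ref{lem:fixedhamiltonian} applied to $\cS_1 = \cL + \cH_1$ with initial state $\rho$ and time $t_1$. For the inductive step, assuming the bound for $m-1$, I first apply Lemma~\ref{lem:fixedhamiltonian} to the final piece to get
\begin{align*}
D(\rho_m\|\sigma) &\leq e^{-\alpha t_m}\,D(\rho_{m-1}\|\sigma) \\
&\quad + \alpha^{-1}\bigl(1-e^{-\alpha t_m}\bigr)\,\bigl\|\sigma^{-\tfrac12}[H_m,\sigma]\sigma^{-\tfrac12}\bigr\|.
\end{align*}
Plugging in the induction hypothesis for $D(\rho_{m-1}\|\sigma)$ and distributing the $e^{-\alpha t_m}$ factor, the leading term becomes $e^{-\alpha(T_{m-1}+t_m)}D(\rho\|\sigma) = e^{-\alpha T}D(\rho\|\sigma)$, and each term of the sum at index $l<m$ picks up an extra $e^{-\alpha t_m}$, so its exponential weight combines with the inductive weight to give the correct telescoping coefficient associated to the time remaining after step $l$. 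The new term from the last step contributes at index $l=m$ with the weight prescribed in the claim.

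The only non-routine points are bookkeeping: one must be careful that the exponential prefactor attached to the $l$-th commutator in the statement corresponds to the total MLSI-driven contraction acting on the error introduced at step $l$ during the subsequent steps $l+1,\ldots,m$, i.e.\ $\exp\bigl(-\alpha \sum_{j=l+1}^m t_j\bigr)$, and that this matches the intended reading of the factor $e^{-\alpha T_l}$ in the statement (with $T_l$ the appropriate partial sum from the displayed definition). Once this correspondence is fixed, all the estimates collapse into a clean inductive identity.

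The main obstacle is really only the application of Lemma~\ref{lem:fixedhamiltonian} at the inductive step: one has to check that the lemma is legitimately applicable even though $\rho_{m-1}$ is itself the output of an earlier evolution and not a generic state. But since Lemma~\ref{lem:fixedhamiltonian} holds for \emph{every} input state, and since $\cS_m = \cL + \cH_m$ has the exact form to which the lemma applies (fixed $\cL$, a single Hamiltonian $H_m$), this presents no difficulty. No further analytic input beyond Lemma~\ref{lem:fixedhamiltonian} is needed; the full argument is a two-line induction plus careful handling of the exponential weights.
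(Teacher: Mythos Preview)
Your proposal is correct and follows essentially the same route as the paper: induction on $m$, with Lemma~\ref{lem:fixedhamiltonian} serving both as the base case and as the single-step estimate fed into the inductive hypothesis. The only cosmetic difference is that the paper peels off the \emph{innermost} evolution and applies the induction hypothesis to the outer $k$ steps (treating $e^{t_{k+1}\cS_{k+1}}(\rho)$ as the new initial state), whereas you peel off the \emph{outermost} step and apply the induction hypothesis to $\rho_{m-1}$; both orderings produce the same telescoping of exponential weights.
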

\begin{proof}
We will prove the claim by mathematical induction. For $m=1$, the claim is just Lemma~\ref{lem:fixedhamiltonian}. Assume the claim is true for all $n\leq k$. It follows from our induction hypothesis that:
\begin{align*}
&D(\bigcirc_{l=1}^{k+1} e^{t_l \cS_{t_l} }(\rho)||\sigma)\leq e^{-T_k}D(e^{t_{k+1}}\cS_{k+1}(\rho)||\sigma)\\&+\alpha^{-1}\sum\limits_{l=1}^ke^{-\alpha T_l}(1-e^{-\alpha t_l})\|\sigma^{-\frac{1}{2}}[H_l,\sigma]\sigma^{-\frac{1}{2}}\|
\end{align*}
It again follows from Lemma~\ref{lem:fixedhamiltonian} that:
\begin{align*}
&e^{-T_k}D(e^{t_{k+1}}\cS_{k+1}(\rho)||\sigma)\leq e^{-T_{k+1}}D(\rho||\sigma)\\&+\alpha^{-1}\sum\limits_{l=1}^{k+1}e^{-\alpha T_l}(1-e^{-\alpha t_l})\|\sigma^{-\frac{1}{2}}[H_l,\sigma]\sigma^{-\frac{1}{2}}\|,
\end{align*}
which yields the claim.
\end{proof}
The next step is to consider the evolution of a Lindbladian with a time-dependent Hamiltonian but uniform dissipative part and bound its relative entropy. That is, let $\cL:\M_{2^n}\to\M_{2^n}$ be a Lindbladian with fixed point $\sigma$. Moreover, let $\cH_t:\M_{2^n}\to\M_{2^n}$ be given by $\cH_t(X)=i[H_t,X]$ for some time-dependent Hamiltonian $H_t$. Letting $\rho(t)$ be the solution to
\begin{align*}
\frac{d}{dt}\rho(t)=\cS_t(\rho(t)),\quad \rho(0)=
\rho,
\end{align*}
for some initial state $\rho$ one can show that $\rho(t)=\mathcal{T}_t(\rho)$ with 
\begin{align*}
\mathcal{T}_t(\rho)=\lim\limits_{n\to\infty} \bigcirc_{l=1}^n e^{\frac{t}{n} \cS_{t_l} }(\rho).
\end{align*}
Our strategy to estimate the convergence of the relative entropy $\cT_t$ will be to apply the previous bounds for each $e^{\frac{t}{n} \cS_{t_l} }$ individually and then take the $n\to\infty$ limit. We then obtain
\begin{theorem}\label{thm:continuumlimit}
Let $\cL:\M_{2^n}\to\M_{2^n}$ be a Lindbladian with fixed point $\sigma$ satisfying a MLSI with constant $\alpha>0$. Moreover, let $\cH_t:\M_{2^n}\to\M_{2^n}$ be given by $\cH_t(X)=i[H_t,X]$ for some time-dependent Hamiltonian $H_t$. Moreover, let $\mathcal{T}_t$ be the evolution of the system under the Lindbladian $\cS_{t}=\cL+\cH_t$ from time $0$ to $t$. Then for all states $\rho$ and times $t>0$:
\begin{align}\label{equ:ourintegral}
&D(\mathcal{T}_t(\rho)||\sigma)\leq e^{-\alpha t}D(\rho||\sigma)\nonumber\\&+\int\limits_{0}^t d\tau e^{-\alpha(t-\tau)}\|\sigma^{-\frac{1}{2}}[H_t,\sigma]\sigma^{-\frac{1}{2}}\|.
\end{align}
\end{theorem}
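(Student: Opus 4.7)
The plan is to discretize the evolution interval $[0,t]$, apply Corollary \ref{cor:manytimesteps} to a Trotter-like approximation of $\cT_t$, and then pass to the continuum limit. Specifically, partition $[0,t]$ uniformly into $n$ steps of length $t/n$ with grid points $\tau_l = l t/n$, and consider the channel $\cT_t^{(n)} := \bigcirc_{l=1}^n e^{(t/n)\cS_{\tau_{n-l+1}}}$, where the reindexing is chosen so that the \emph{rightmost} factor (applied first) corresponds to the \emph{earliest} physical time $\tau_1$, matching the composition convention implicit in the corollary. By the definition of $\cT_t$ given before the theorem (and standard Trotter--Kato reasoning), $\cT_t^{(n)}(\rho)\to \cT_t(\rho)$ as $n\to\infty$.

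Applying Corollary \ref{cor:manytimesteps} with $t_l = t/n$ uniformly and Hamiltonians $H_l = H_{\tau_{n-l+1}}$ yields
\begin{align*}
D(\cT_t^{(n)}(\rho)\|\sigma) &\le e^{-\alpha t} D(\rho\|\sigma) \\
&\phantom{\le{}} + \alpha^{-1}\sum_{l=1}^n e^{-\alpha l t/n}\bigl(1-e^{-\alpha t/n}\bigr)\bigl\|\sigma^{-1/2}[H_{\tau_{n-l+1}},\sigma]\sigma^{-1/2}\bigr\|.
\end{align*}
Reindexing by $k=n-l+1$ turns the weight into $e^{-\alpha(t-\tau_k + t/n)}$, and expanding $\alpha^{-1}(1-e^{-\alpha t/n}) = t/n + O(n^{-2})$ shows that the finite sum is exactly a Riemann approximation to
$\int_0^t e^{-\alpha(t-\tau)}\|\sigma^{-1/2}[H_\tau,\sigma]\sigma^{-1/2}\|\,d\tau$ (so the $H_t$ inside the integral in the theorem statement should be read as the time-dependent $H$ evaluated at the integration variable $\tau$). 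Assuming continuity of $\tau\mapsto H_\tau$, this Riemann sum converges to the integral.

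To complete the passage to the limit, invoke lower semicontinuity of the relative entropy in its first argument: since $\cT_t^{(n)}(\rho)\to\cT_t(\rho)$, one has $D(\cT_t(\rho)\|\sigma)\le \liminf_n D(\cT_t^{(n)}(\rho)\|\sigma)$. Combining this with the convergence of the right-hand side above yields the claimed inequality \eqref{equ:ourintegral}.

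The main obstacle is bookkeeping: aligning the composition order from Corollary \ref{cor:manytimesteps} (where the index $l=1$ is applied last) with the natural forward time-ordering of the Lindbladian evolution, so that the weights $e^{-\alpha T_l}$ correctly land as $e^{-\alpha(t-\tau)}$ rather than as the time-reversed $e^{-\alpha\tau}$. The remaining ingredients — Trotter convergence, Riemann-sum convergence, and relative-entropy lower semicontinuity — are standard and require only mild regularity (e.g.\ continuity) of $\tau\mapsto H_\tau$; no new quantitative input beyond the already-established MLSI for $\cL$ is needed.
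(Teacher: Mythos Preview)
Your proposal is correct and follows essentially the same route as the paper: discretize the time-dependent evolution into a product of fixed-Hamiltonian steps, apply Corollary~\ref{cor:manytimesteps}, Taylor-expand $\alpha^{-1}(1-e^{-\alpha t/n})$ to recognize a Riemann sum, and pass to the limit. You are in fact more explicit than the paper on two points it glosses over --- the reindexing needed to align the composition order of the corollary with forward time-ordering, and the use of lower semicontinuity of the relative entropy to justify taking the limit through the inequality --- and you correctly observe that the integrand should read $H_\tau$ rather than $H_t$.
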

\begin{proof}
We have
\begin{align*}
\mathcal{T}_t(\rho)=\lim\limits_{n\to\infty} \bigcirc_{l=1}^n e^{\frac{t}{n} \cS_{t_l} }(\rho)
\end{align*}
with $t_l=\frac{tl}{n}$.
It follows from Cor.~\ref{cor:manytimesteps} for any fixed $n$,time $t\geq 0$ and initial state $\rho$ we have:
\begin{align}\label{equ:inductionclaim}
&D( \bigcirc_{l=1}^n e^{\frac{l}{n} \cL_{t_l} }(\rho)||\sigma)\leq e^{-\alpha t}D(\rho||\sigma)\nonumber\\&+\alpha^{-1}\sum\limits_{l=1}^ne^{-\alpha \frac{nt-tl}{n}}(1-e^{-\frac{\alpha }{n}})\|\sigma^{-\frac{1}{2}}[H_{t_l},\sigma]\sigma^{-\frac{1}{2}}\|.
\end{align}
We will now take the limit $n\to\infty$ for a fixed $t$.
First note that:
\begin{align*}
&\alpha^{-1}\sum\limits_{l=1}^{n}e^{-\alpha \frac{nt-tl}{n}}(1-e^{-\frac{\alpha }{n}})\|\sigma^{-\frac{1}{2}}[H_{t_l},\sigma]\sigma^{-\frac{1}{2}}\|=\\&
\sum\limits_{l=1}^n \frac{1}{n}e^{-\alpha \frac{nt-tl}{n}}\|\sigma^{-\frac{1}{2}}[H_{t_l},\sigma]\sigma^{-\frac{1}{2}}\|+\cO\left(\frac{1}{n^2}\right)
\end{align*}
by performing a Taylor expansion of the $(1-e^{-\frac{\alpha }{n}})$ term.
We conclude that 
\begin{align*}
&\lim\limits_{n\to\infty}\alpha^{-1}\sum\limits_{l=1}^{n}e^{-\alpha \frac{nt-tl}{n}}(1-e^{-\frac{\alpha }{n}})\|\sigma^{-\frac{1}{2}}[H_{t_l},\sigma]\sigma^{-\frac{1}{2}}\|=\\
&\lim_{n\to\infty}\sum\limits_{l=1}^n\frac{1}{n}e^{-\alpha \frac{nt-tl}{n}}\|\sigma^{-\frac{1}{2}}[H_{t_l},\sigma]\sigma^{-\frac{1}{2}}\|=\\&\int\limits_{0}^t d\tau e^{-\alpha(t-\tau)}\|\sigma^{-\frac{1}{2}}[H_\tau,\sigma]\sigma^{-\frac{1}{2}}\|.
\end{align*}
This yields the claim.
\end{proof}
As Lemma~\ref{lem:dataprocessedtriangle} was the main building block to prove this bound, it should come as no surprise that Theorem~\ref{thm:continuumlimit} performs well in the regimes in which Lemma~\ref{lem:dataprocessedtriangle} does too. That is, either for $\sigma$ a high temperature Gibbs state or $[H_t,\sigma]=0$. In the first case, the commutator $\|\sigma^{-\frac{1}{2}}[H_{t_l},\sigma]\sigma^{-\frac{1}{2}}\|$ is small for all times. In the second case we, have that the evolution will leave $\sigma$ approximately invariant at the end of the computation, resulting in small values  $\|\sigma^{-\frac{1}{2}}[H_{t_l},\sigma]\sigma^{-\frac{1}{2}}\|$ irrespective of the temperature. On the other hand, the contribution of $\|\sigma^{-\frac{1}{2}}[H_{t_l},\sigma]\sigma^{-\frac{1}{2}}\|$ at the beginning of the computation is exponentially suppressed in the integral in Eq.~\eqref{equ:ourintegral}, even for $\sigma$ a Gibbs state at very high inverse temperature.
This phenomenon is illustrated in the main text with annealers supposed to prepare the ground state of a classical Hamiltonian and $\sigma$ a diagonal quantum state.

\subsection{Noise model for adiabatic quantum computation}\label{sec:noisemodel}
To the best of our knowledge, spontaneous emissions, decoherence and control errors are the main sources of noise in superconducting qubits~\cite{headquarters2019technical}, the leading platform for state of the art implementations of quantum annealers. We will now demonstrate our technique on a Lindbladian reflecting these sources of noise.
In order not to overcomplicate the presentation, we will assume that there is no cross-talk between the qubits and that each term of the Lindbladian only acts on one qubit. Moreover, we will model the control error as follows. At each time $s$, instead of implementing the Hamiltonian $H_s$ we implement the random Hamiltonian 
\begin{align*}
    &\tilde{H}_s=\sum_{i\sim j} (a_{i,j}+\xi_{i,j}^{z,z}(s))Z_i Z_j\\&+\sum_{i}(b_i+\xi_{i}^{z}(s))Z_i- \sum(\Gamma_i(s)+\xi_{i}^{x}(s)) X_i,
\end{align*}
where $\xi_{i}^{z},\xi_{i}^{x}$ is white noise with the same standard deviation $r_3$ and mean $0$. But it is straightforward to adapt our techniques to other noise models. Moreover, as we will see later, the $\xi_{i,j}^{z,z}$ control errors will only lead to extra dephasing that does not influence our analysis. Thus, we will set the $Z_iZ_j$ control errors to $0$.
All these assumptions lead to a Lindbladian with $\cL$ given by $\cL=\sum_i\cL_i$, where $\cL_i$ acts on qubit $i$ and has the terms:
\begin{align}\label{equ:noisemodel}
\cL_i=r_1\cL_{amp}+r_2\cL_{deph}+r_3\cL_{cont},
\end{align}
where $\cL_{amp}$ is the amplitude damping term, $\cL_{deph}$ the dephasing and $\cL_{cont}$ the control error. 
One can show that under our  white noise assumption, the control error is given by:
\begin{align*}
\cL_{cont}(\rho)=X\rho X+Z\rho Z-2\rho.
\end{align*}
We refer to e.g.~\cite{Budini_2001} for a derivation of this term.
Each local term $\cL_i$ then acts as:
\begin{align*}
&\begin{pmatrix}
a & b \\
c & d 
\end{pmatrix}\mapsto\\&\begin{pmatrix}
r_1d+r_3(d-a) & -(3r_3+r_2+\tfrac{r_1}{2})b+r_3c \\
-(3r_3+r_2+\tfrac{r_1}{2})c+r_3b& -r_1d+r_3(a-d)
\end{pmatrix}
\end{align*}
on $2\times 2$ matrices.
Moreover, we assume for simplicity that the noise rates $r_1,r_2$ and $r_3$ are the same on each qubit.
One can then easily solve $\cL_i(\sigma)=0$ for a state $\sigma$ and conclude that the fixed point of this evolution is given by
\begin{align}\label{equ:fixedstate2}
\sigma=\begin{pmatrix}
\frac{r_1+r_3}{r_1+2r_3}& 0 \\
0 &\frac{r_3}{r_1+2r_3}
\end{pmatrix}.
\end{align}
Thus, we see that in the regime $r_1\gg r_3$, the fixed point is essentially the $\ket{0}$ state, for $r_3\gg r_1$ we have a state close to maximally mixed state and for $r_1\simeq r_3$ we have the state $\tfrac{2}{3}\ketbra{0}{0}+\tfrac{1}{3}\ketbra{1}{1}$.
Note that we can write $\cL_i$ as $\cL_i=(r_1+2r_3) \cL_{\sigma}+\cL'$, where $\cL_{\sigma}(\rho)=\tr\lb \rho\rb\sigma-\rho$ and $\cL'$ is a Lindbladian s.t. $\cL'(\sigma)=0$. This implies that $\cL_i$ satisfies a MLSI with the same constant as  $(r_1+2r_3)\cL_{\sigma}.$
This gives a MLSI for $\cL$ with $\alpha(r_1,r_3)=r_1+2r_3$~\cite[Theorem 19]{Beigi_2020}. Note that the constant $r_2$ does not influence the fixed point or the relative entropy convergence. This is why we have not added the extra dephasing generated by the control errors of the $Z_iZ_j$ terms. This also indicates that our convergence analysis is suboptimal for small times. Indeed, as we pick our initial state as $\ket{\psi_0}=\ket{+}^{\otimes n}$, it is not difficult to see that the initial convergence rate scales like $r_1+2r_3+r_2$. However, at the end of the evolution we expect the state to mostly be diagonal and, thus, the action of the dephasing noise is negligible. We conclude that for large times the convergence rate is indeed of the order $r_1+2r_3$. It should also be noted that some works suggest that the dephasing noise actually occurs in the eigenbasis of $H_s$~\cite{Albash_2015}. As argued in this work, if the state at time $s$ is close to the ground state of $H_s$, the effect of dephasing in the eigenbasis of $H_s$ is negligible.  On the other hand, note that the dephasing obtained because of the $Z_iZ_j$ terms is necessarily in the computational basis.

\subsection{Computations required for linear path}\label{sec:computations}

Here we gather the computations required to exemplify how inequality~\eqref{equ:entroadiabatic} behaves for the path $H_s=\frac{s}{T}H_0+\left(1-\frac{t}{T}\right)H_I$ if we  evolve from the initial state $\rho=\ketbra{+}{+}^{\otimes n}$ under $\cS_s$ from $0$ to $T$. We will assume that the noise models is given as in Eq.~\eqref{equ:noisemodel} with fixed point $\sigma$ given by the tensor product of the state in Eq.~\eqref{equ:fixedstate}. 
Note that as $H_I$ commutes with $\sigma$, we obtain from Eq.~\eqref{equ:entroadiabatic} that:
\begin{align}\label{equ:entroadiabatic2}
&D(\mathcal{T}_T(\rho)||\sigma)\leq\nonumber\\& e^{-\alpha T}D(\rho||\sigma)+\int\limits_{0}^T d\tau e^{-\alpha(T-\tau)}\lb1-\frac{\tau}{T}\rb\|\sigma^{-\frac{1}{2}}[H_0,\sigma]\sigma^{-\frac{1}{2}}\|.
\end{align}
Integrating Eq.~\eqref{equ:entroadiabatic2}, we obtain:
\begin{align}\label{equ:boundinterpolation}
&D(\mathcal{T}_T(\rho)||\sigma)\leq e^{-\alpha T}D(\rho||\sigma)\nonumber\\&+\frac{1}{\alpha^2T}\lb 1-e^{-\alpha T}\alpha T-e^{-\alpha T}\rb\|\sigma^{-\frac{1}{2}}[H_0,\sigma]\sigma^{-\frac{1}{2}}\|.
\end{align}
By a triangle inequality and noting that the state $\sigma$ commutes with the $Z_i$ terms, we can further bound the commutant by:
\begin{align*}
\|\sigma^{-\frac{1}{2}}[H_0,\sigma]\sigma^{-\frac{1}{2}}\|\leq \sum_i |\Gamma_i|\|\sigma^{-\frac{1}{2}}[X_i,\sigma]\sigma^{-\frac{1}{2}}\|.
\end{align*}
It will be convenient to reparametrize the state as $\sigma=(e^{\gamma}+e^{-\gamma})^{-n}\otimes_{i=1}^n e^{\gamma Z_i}$ for  $\gamma=\frac{1}{2}\log\lb \frac{r_1+r_3}{r_3}\rb$ and let $r=r_1+2r_3$. 
One can then show that $\|\sigma^{-\frac{1}{2}}[X_i,\sigma]\sigma^{-\frac{1}{2}}\|=2\sinh(\gamma)$.
Putting all these elements together, we conclude that with this noise model the relative entropy after evolving the system for time $T$ is bounded by
\begin{align}\label{equ:boundentropyising2}
&D(\mathcal{T}_T(\ketbra{+}{+}^{\otimes n})||\sigma)\leq e^{-r T}n\log(2\cosh(\gamma))\nonumber\\& +\frac{2\sinh(\gamma)\lb 1-e^{-rT}r T-e^{-rT}\rb}{r^2T} n\bar{\Gamma}.
\end{align}
For the sake of conciseness, denote the r.h.s. of Eq.~\eqref{equ:boundentropyising2} by $n f(\gamma,r,T,\bar{\Gamma})$ with
\begin{align}\label{equ:locadep}
&f(\gamma,r,T,\bar{\Gamma})= e^{-r T}\log(2\cosh(\gamma))\nonumber\\& +\frac{2\sinh(\gamma)\lb 1-e^{-rT}r T-e^{-rT}\rb}{r^2T} \bar{\Gamma}.
\end{align}
In terms of the noise rates and simplifying the expressions a bit we have:
\begin{align*}
&f(r_1,r_3,T,\bar{\Gamma})\leq e^{-(r_1+2r_3) T}\log\lb \frac{r_1+2r_3}{\sqrt{r_3(r_1+r_3)}}\rb\nonumber\\& +r_1\frac{ 1-e^{-(r_1+2r_3)T}(r_1+2r_3) T-e^{-(r_1+2r_3)T}}{\sqrt{r_1+r_3}(r_1+2r_3)^2T} \bar{\Gamma}.
\end{align*}

\end{document}